\documentclass{article}

\usepackage[preprint]{neurips_2026}

\usepackage[utf8]{inputenc} 
\usepackage[T1]{fontenc}    
\usepackage{hyperref}       
\usepackage{url}            
\usepackage{booktabs}       
\usepackage{amsfonts}       
\usepackage{nicefrac}       
\usepackage{microtype}      
\usepackage{xcolor}         
\usepackage{graphicx}
\usepackage{amsmath}
\usepackage{amssymb}
\usepackage{mathtools}
\usepackage{amsthm}
\usepackage{multirow}
\usepackage{algorithm}
\usepackage{algorithmic}
\usepackage{dsfont}
\usepackage{tikz}
\usetikzlibrary{positioning, arrows.meta, calc, decorations.pathmorphing, backgrounds, fit}

\theoremstyle{plain}
\newtheorem{theorem}{Theorem}[section]

\newtheorem{lemma}[theorem]{Lemma}
\newtheorem{corollary}[theorem]{Corollary}

\theoremstyle{definition}

\newtheorem{definition}[theorem]{Definition}

\usepackage{wrapfig}
\usepackage{caption}
\usepackage{subcaption}
\makeatletter
\renewcommand{\ALG@name}{Alg.}
\makeatother
\usepackage[most]{tcolorbox}  

\usepackage{soul}
\title{Context-Gated Associative Retrieval: From Theory to Transformers}

\author{Moulik Choraria$^{1,}$ $\thanks{$^{1}$Equal contribution; Correspondence: moulikc2@illinois.edu}$ \\
UIUC \\
\And
Argyrios Gerogiannis$^{1}$ \\
UIUC \\
\And
Vidhata Jayaraman \\
UIUC \\
\AND
Ankur Mani \\
University of Minnesota, Twin Cities
\And
Lav R. Varshney \\
Stony Brook University 
}

\begin{document}

\maketitle

\begin{abstract}

Hopfield networks and their generalizations have established deep connections among biological associative memories, statistical physics, and transformers. Yet most models treat retrieval as a fixed query-to-memory mapping, ignoring the role of external context in recall. In this work, we propose a two-stage associative memory architecture, wherein a context-gate subcircuit reshapes the retrieval energy landscape before and during recall. We show theoretically that context gating increases inter-memory separation while inducing sparsity, translating into exponential improvements in retrieval. Crucially, we prove that the system admits a unique self-consistent fixed point, revealing that the resulting retrieval state is driven by both a direct contextual bias and a second-order retrieval-gate feedback loop. We then bridge this theory to transformers; specifically, we evaluate a first-order approximation on Llama-3, confirming that in-context learning acts as context-gated retrieval. Native dynamics mirror our theory: context localizes a memory subspace, enabling the zero-shot query to cleanly discriminate. Ultimately, this framework provides a mechanistic link between associative memory theory and LLM phenomenology.
\end{abstract}

\section{Introduction}

The Hopfield network (\cite{hopfield1982, hopfield1984}) established one of the earliest bridges between statistical physics and neural computation, casting memory as energy minimization whose attractors encode stored patterns. Recently, dense associative memories (\cite{krotov2016dense, demircigil2017model}) have allowed lifting the classical linear capacity to polynomial scaling via higher-order interactions, and modern Hopfield networks (MHNs) (\cite{ramsauer2020hopfield}), achieve exponential capacity while recovering softmax attention as the update rule---revealing an equivalence between associative recall and transformer attention (\cite{vaswani2017attention})
This connection has inspired architectures and new insights into in-context learning (ICL) (\cite{burnsFE2025, wuHHZL2025}), factual recall (\cite{nichaniLB2024}), and energy-based generative modeling (\cite{phamRNZAK2025}).

What remains less understood is how $\emph{external context}$ shapes retrieval. Biological memory is modulated by behavioral state and task demands (\cite{smithV2001}), yet most associative memory models treat the query as the sole retrieval input. This context-dependence phenomenon also manifests in large language models (LLMs), for instance, when the $\emph{same}$ query elicits drastically different behaviors based on the system prompt (\cite{nardo2023waluigi}), or as ICL demonstrably improves with additional contextual examples (\cite{brownMRSKDNSS_etal_2020}). Interpretability-adjacent work (\cite{hendelGG2023, merulloEP2024, lvCZWLWXY2024}) further suggests a two-phase delineation in tasks like ICL or factual recall: (a) a $\emph{function determination}$ phase that only processes context to select the task (e.g., $\textit{get\_capital()})$, leaving the query untouched, and (b) an $\emph{execution}$ phase that retrieves the answer for the query ($\textit{get\_capital(France) = Paris}$), while evidence of similar redundancies in multimodal processing (\cite{JiangCZLSY2025, ChorariaWBSWZSV2025}) hint at a similar phenomenon at play.  

Recent work has begun addressing this gap: \cite{bettetiBBZ2025} proposed input-driven synaptic plasticity that reshapes Hopfield energy landscapes, while \cite{podlaskiAV2025} introduced context-modular networks that partition binary memories across contexts. However, a unified energy-based architecture modeling this continuous interaction is still missing. To address this, we propose a \textbf{two-stage associative memory} within the framework of \cite{krotov2021ham}, where an external context drives query-based retrieval. Our main contributions are as follows:

\begin{enumerate}
    \item \textbf{Theoretical Separation and Sparsity:} We show analytically that context gating increases the effective separation between target and competitor memories, translating into exponential improvements in retrieval via the MHN framework. Furthermore, we characterize a phase transition in the isolated gate dynamics that natively induces winner-take-all sparsity.
    \item \textbf{Self-Consistent Contextual Retrieval:} We formalize context gating and retrieval as a coupled dynamical system. We then establish that this system admits a unique self-consistent fixed point, revealing that recall is driven by both a direct context-filtered bias and non-linear retrieval-gate feedback.
    \item \textbf{Bridging Theory to Transformers:} We translate our architectural framework to the internal mechanisms of LLMs. By evaluating a first-order approximation of our model on Llama-3-8B, we demonstrate that native transformer dynamics functionally mirror our theory during in-context learning and factual recall: context first localizes the task-relevant memory subspace, enabling the zero-shot query to cleanly discriminate within this narrowed set.
\end{enumerate}

\section{Related Work}
\textbf{Biological Parallels: Astrocytes as Contextual Gates.} Standard associative memories rely on strictly neuron-to-neuron interactions, but biological memory leverages multi-cellular architectures to integrate external context. Notably, astrocytes act as active, context-dependent gatekeepers of neural activity (\cite{ValtchevaV2016}). Modulated by global behavioral states (\cite{LeftonWDOZ_et_al2025, GuttenplanMSBMAKF2025}), astrocytic networks structurally prime local synaptic landscapes, dictating neuronal responsiveness and plasticity. This maps directly onto our two-stage architecture: our context-gate sub-circuit serves as the astrocytic analogue, processing external context to dynamically reshape the retrieval energy landscape prior to recall.

\textbf{Hopfield Networks \& Sparsity.} Dense Associative Memories and Hopfield Networks \citep{karbasiSSV2013, krotov2016dense, demircigil2017model, ramsauer2020hopfield} achieve exponential storage capacity, recovering dense softmax attention as their continuous update rule. However, because softmax assigns global non-zero probabilities, MHNs often retrieve a continuous superposition of memories. To improve retrieval speed and accuracy, Sparse Modern Hopfield Networks (SMHNs) \citep{HuYWXCL2023, SantosNMM2025} induce sparsity by modifying the energy function and update rules. While our gating mechanism similarly induces sparse retrieval, it fundamentally diverges from these approaches: rather than altering the $\textrm{LogSumExp}$ energy function, our framework enforces sparsity architecturally by narrowing the searchable subspace before the query is applied.

\textbf{Steering Vectors \& In-Context Learning.} Mechanistic interpretability suggests LLMs execute in-context learning (ICL) by compressing task demonstrations into task or function vectors (\cite{hendelGG2023, toddLSMWB2024}). Building on this, \cite{liuYXZ2024} show that extracting an "In-Context Vector" (ICV) from demonstrations and injecting it into the residual stream steers the model to execute intended tasks on new queries. In our framework, the ICV directly parallels our external context vector: it reshapes the energy landscape by encoding the task, localizing the relevant memory subspace for the query to act upon. This yields a principled associative memory perspective on empirical ICL steering phenomena.

\section{Theory}

\subsection{Construction}
\label{sec:const}

We design a two-stage associative memory with the goal of exploiting the role of context in retrieval. For brevity, we defer a formal primer on the modular energy framework~(\cite{krotovHPP2025}) to Appendix~\ref{sec:appendix_primer_gdam}. Here, we describe its architecture and key components. 

\subsubsection{Neuron Layers}
The architecture (Figure~\ref{fig:arch}) comprises four layers. The \textbf{context} ($c \in \mathbb{R}^{d_c}$) receives external input specifying task or behavioral state, the \textbf{gates} ($s \in \mathbb{R}^{N}$) which contain one neuron per stored memory and selectively activate a sparse subset based on context, and the \textbf{query} ($q \in \mathbb{R}^{d_q}$) layer, which receives the noisy input for retrieval. Each uses a quadratic Lagrangian $L(x) = \tfrac{1}{2}\|x\|^2$, so the activation is the identity: $\hat{c} = c$, $\hat{s} = s$, $\hat{q} = q$. On the other hand, the \textbf{retrieval} layer ($r \in \mathbb{R}^{N}$) uses a LogSumExp Lagrangian, $L_r(r) = \tfrac{1}{\beta}\log\sum_{i=1}^{N} \exp(\beta r_i)\,$, yielding temperature-scaled softmax activations $\hat{r}_i = \exp(\beta r_i)/\sum_j \exp(\beta r_j)$. Since $\hat{r}_i \in [0,1]$ and $\sum_i \hat{r}_i = 1$, this layer naturally produces a distribution over the $N$ stored memories.

\begin{figure}[t]
\centering
\resizebox{0.4\columnwidth}{!}{%
\begin{tikzpicture}[
    node distance=2.8cm and 2.8cm,
    layer/.style={
        circle, draw, thick, minimum size=1.2cm,
        font=\sffamily\bfseries\Large
    },
    context/.style={layer, fill=blue!10, draw=blue!50},
    gate/.style={layer, fill=orange!12, draw=orange!60},
    retrieval/.style={layer, fill=green!10, draw=green!50!black},
    query/.style={layer, fill=red!8, draw=red!50},
    synlabel/.style={font=\sffamily\small, fill=white, inner sep=2pt, rounded corners=2pt},
    actlabel/.style={font=\sffamily\footnotesize, text=gray!70},
    groupbox/.style={draw=gray!40, dashed, rounded corners=10pt, inner sep=14pt},
]
    \node[gate] (S) {$S$};
    \node[context, below=1.6cm of S] (C) {$C$};
    \node[retrieval, right=of S] (R) {$R$};
    \node[query, below=1.6cm of R] (Q) {$Q$};

    \begin{scope}[on background layer]
        \node[groupbox, fill=blue!2, fit=(S)(C)] (ctxbox) {};
        \node[groupbox, fill=red!2, fit=(R)(Q)] (retbox) {};
    \end{scope}

    \node[font=\sffamily\footnotesize, text=gray!60, anchor=south] at (ctxbox.north) {Context Circuit};
    \node[font=\sffamily\footnotesize, text=gray!60, anchor=south] at (retbox.north) {Retrieval Circuit};

    \node[actlabel, anchor=north] at ($(C.south)+(0,-0.05)$) {$\hat{c} = c$};
    \node[actlabel, anchor=east] at ($(S.west)+(-0.2,0)$) {$\hat{s} = s$};
    \node[actlabel, anchor=west] at ($(R.east)+(0.2,0)$) {$\hat{r}\!=\!\mathrm{softmax}(\beta r)$};
    \node[actlabel, anchor=north] at ($(Q.south)+(0,-0.05)$) {$\hat{q} = q$};

    \draw[<->, thick, blue!60] (S) -- (C)
        node[midway, synlabel, left=5pt] {$h_1$};

    \draw[->, thick, orange!60] ($(S.north)+(0.15,0)$) arc[start angle=0, end angle=320, radius=0.45cm]
        node[synlabel, above=6pt, xshift=-2pt] at (S.north) {$h_2$};

    \draw[<->, thick, purple!50, densely dashed] (S) -- (R)
        node[midway, synlabel, above=5pt] {$h_3\;(\lambda)$};

    \draw[<->, thick, green!50!black] (R) -- (Q)
        node[midway, synlabel, right=5pt] {$h_4$};

\end{tikzpicture}}
\caption{Two-stage associative memory architecture. The context circuit (left) establishes gate activations $s$ from context $c$ via alignment ($h_1$) and inter-memory competition ($h_2$). The retrieval circuit (right) recalls from stored memories via softmax attention ($h_4$). Cross-circuit coupling ($h_3$, dashed) enables bidirectional information flow controlled by $\lambda$.}
\label{fig:arch}
\vspace{-0.7em}
\end{figure}

\subsubsection{Hypersynapses}

We next describe the four hypersynapses encoding the functional structure of the network.

\textbf{(i) Context-gate alignment ($h_1$).}
A bilinear energy $E_{h_1} = -\hat{s}^\top W_{h_1} \hat{c}$ which promotes alignment between the gate state and the context signal. The weight matrix $W_{h_1} \in \mathbb{R}^{N \times d_c}$ encodes how context activates gates, which in turn drive retrieval. This produces synaptic currents $I_s^{(h_1)} = W_{h_1}\hat{c}$ (context drives gates) and $I_c^{(h_1)} = W_{h_1}^\top \hat{s}$ (gates feed back to context).\\
\textbf{(ii) Sparsity-inducing self-synapse ($h_2$).}
The crux of our construction, wherein a positive-energy self-connection $E_{\{h_2\}} = \frac{\alpha}{2}\hat{s}^\top W_{h_2} \hat{s}$ on the gate layer penalizes simultaneous activation of gates associated with similar memories. The weight matrix has entries $W_{h_2}^{ij} = \langle \zeta^i, \zeta^j \rangle$ for $i \neq j$ and zero diagonal, making the penalty proportional to inter-memory similarity and $\alpha\geq0$ is the penalization strength. This produces a competitive signal $I_s^{(h_2)} = -2\alpha W_{h_2}\hat{s}$ that suppresses gates whose memories are close to already-active ones, enforcing sparse gate patterns without explicit regularization. The zero diagonal ensures neurons do not penalized beyond the intrinsic damping from $E^{\text{neuron}}$.\\
\textbf{(iii) Cross-circuit coupling ($h_3$).}
An identity-weighted coupling $E_{h_3} = -\lambda\,\hat{s}^\top \hat{r}$ with scalar strength $\lambda > 0$ bridges the context subcircuit (S--C) and the retrieval subcircuit (R--Q). The bidirectional signals $I_s^{(h_3)} = \lambda\hat{r}$ and $I_r^{(h_3)} = \lambda\hat{s}$ allow context-derived gate activations to bias retrieval while retrieval outcomes refine the gates. Large $\lambda$ forces tight coordination between subcircuits, while small $\lambda$ permits near-independent operation. When the context signal has a larger magnitude, its influence on retrieval dominates the reverse direction.\\
\textbf{(iv) Memory retrieval ($h_4$).}
A bilinear energy $E_{h_4} = -\hat{q}^\top W_{h_4} \hat{r}$ with $W_{h_4} = [\zeta^1, \ldots, \zeta^N] \in \mathbb{R}^{d_q \times N}$ encodes the stored memory patterns. The softmax activations $\hat{r}$ act as attention weights over the columns of $W_{h_4}$, producing a signal $I_q^{(h_4)} = \sum_\mu \zeta^\mu \hat{r}_\mu$ that is a combination of memories. The reverse direction, $I_r^{(h_4)} = W_{h_4}^\top \hat{q}$ computes the similarity of the query and stored memories.

\subsubsection{Evolution Equations}

Combining all synaptic inputs with the self-damping from each neuron layer, we arrive at a succint description for the coupled dynamical system:
\begin{align}
\tau_c \, \dot{c}_i &= (W_{h_1}^\top \hat{s})_i - c_i \label{eq:c_dyn} \\
\tau_s \, \dot{s}_i &= (W_{h_1} \hat{c})_i - \alpha{\textstyle\sum_{j \neq i}} \langle \zeta^i, \zeta^j \rangle\, \hat{s}_j + \lambda\, \hat{r}_i - s_i \label{eq:s_dyn} \\
\tau_r \, \dot{r}_\mu &= \lambda\, \hat{s}_\mu + \langle \zeta^\mu, \hat{q} \rangle - r_\mu \label{eq:r_dyn} \\
\tau_q \, \dot{q}_i &= {\textstyle\sum_\mu} \zeta^\mu_i\, \hat{r}_\mu - q_i \label{eq:q_dyn}
\end{align}

Equation~\eqref{eq:r_dyn} reveals the two-stage mechanism: the pre-softmax score for memory $\mu$ decomposes as a context-driven gate bias $\lambda \hat{s}_\mu$ and a query-memory similarity $\langle \zeta^\mu, \hat{q} \rangle$, with the gate bias able to reshape the retrieval landscape before the query has fully resolved. When the context subcircuit operates at a faster timescale ($\tau_c, \tau_s \ll \tau_r, \tau_q$), the gates effectively precondition retrieval dynamics.

\subsection{Analysis}
\label{sec:theory}

Now, we highlight some theoretical consequences of our architecture.
First, we characterize the sufficient condition for reliable retrieval as a threshold on the separation gap between target and closest competitor. This gap decomposes into raw similarity (as in standard MHNs) plus a context-driven gate contrast, showing how favorable context eases retrieval via $\lambda$. The setup is as follows: we fix a target memory index $\mu$ and query $q$. Define the raw similarity gap
$\Delta_{\mathrm{raw}} := \langle \zeta^\mu, q\rangle - \max_{\nu \neq \mu}\langle \zeta^\nu, q\rangle$,
the gate contrast
$\Delta_{\mathrm{gate}} := s_\mu - \max_{\nu \neq \mu} s_\nu$,
and the effective separation gap
$\Delta := \Delta_{\mathrm{raw}} + \lambda\,\Delta_{\mathrm{gate}}$.

\begin{theorem}\label{thm:separation}
The target pattern $\mu$ is a stable fixed point of the retrieval dynamics with retrieval probability $P(\mu)\ge 1-\epsilon$ if
\[
\Delta = \Delta_{\mathrm{raw}} + \lambda\,\Delta_{\mathrm{gate}} \geq \frac{1}{\beta}\ln\!\left(\frac{(1-\epsilon)(N-1)}{\epsilon}\right).
\]
In particular, when $\Delta_{\mathrm{gate}} > 0$, increasing $\lambda$ increases $\Delta$ and relaxes the required raw separation.
\end{theorem}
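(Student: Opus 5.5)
We will work with the retrieval layer at its fixed point with the gate and query held fixed; this is justified by the timescale separation discussed after the evolution equations, where $s$ and $q$ are treated as quasi-static inputs to the $r$-dynamics. Setting $\dot r_\mu=0$ in the evolution equation for $r$ gives $r_\nu = \lambda s_\nu + \langle \zeta^\nu, q\rangle$ for every $\nu$, since $\hat s=s$ and $\hat q=q$. The retrieval probability is then $P(\mu)=\hat r_\mu = \exp(\beta r_\mu)/\sum_\nu \exp(\beta r_\nu)$.

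The key inequality is a lower bound on every competitor gap. For each $\nu\neq\mu$,
\[
r_\mu - r_\nu = \bigl(\langle\zeta^\mu,q\rangle-\langle\zeta^\nu,q\rangle\bigr)+\lambda(s_\mu-s_\nu)\ \ge\ \Delta_{\mathrm{raw}}+\lambda\Delta_{\mathrm{gate}}=\Delta .
\]
This holds because each bracket is bounded below by its worst case over $\nu$, and $\lambda>0$. Consequently $\sum_{\nu\neq\mu} e^{\beta(r_\nu-r_\mu)}\le (N-1)e^{-\beta\Delta}$, and so
\[
P(\mu)\ \ge\ \frac{1}{1+(N-1)e^{-\beta\Delta}} .
\]
Requiring the right-hand side to be at least $1-\epsilon$ is equivalent to $(N-1)e^{-\beta\Delta}\le \epsilon/(1-\epsilon)$. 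Taking logarithms, this is exactly the stated threshold. The monotonicity claim follows immediately from $\partial\Delta/\partial\lambda=\Delta_{\mathrm{gate}}>0$: for fixed $\epsilon$, the required $\Delta_{\mathrm{raw}}$ drops by $\lambda\Delta_{\mathrm{gate}}$.

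For stability, I would linearize $\tau_r\dot r = \lambda s + W_{h_4}^\top q - r$. With $s$ and $q$ frozen, the Jacobian is $-I/\tau_r$, so the fixed point is globally exponentially stable for the $r$-subsystem. If the feedback through $q$ must be included, I would instead appeal to the energy (Lyapunov) function of the modular framework in Appendix~\ref{sec:appendix_primer_gdam}, whose nonincrease along trajectories gives convergence. Local stability then follows from the softmax Jacobian $\beta(\mathrm{diag}(\hat r)-\hat r\hat r^\top)$. Its entries are $O(\beta\epsilon)$ when $\hat r$ is concentrated on $\mu$, so it is a small perturbation of the damping term $-I$.

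The main obstacle is this stability claim in the fully coupled system, particularly making precise what "stable fixed point of the retrieval dynamics" means when $q$ is updated. I would try first to state and prove the result for the $r$-subsystem with the context and query treated as inputs, which is rigorous and sufficient for the probability bound. I would then argue that near-one-hot $\hat r$ makes the additional coupling Jacobian small, so stability persists via a Gershgorin-type bound.
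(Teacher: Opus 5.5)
Your proposal is correct and follows essentially the paper's own route. You take the fixed-point logits $r_\nu=\langle\zeta^\nu,q\rangle+\lambda s_\nu$ with $s$ and $q$ given, lower-bound every competitor gap by $\Delta$, deduce $P(\mu)\ge 1/(1+(N-1)e^{-\beta\Delta})$, and solve for the threshold. Your bracket-by-bracket justification using $\lambda>0$ is in fact more careful than the paper's appeal to ``$\Delta$ as the minimum gap,'' since the two maxima need not be attained at the same $\nu$.

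The paper's proof does not address stability at all, so your observation that the frozen $r$-subsystem has Jacobian $-I/\tau_r$ already covers that part of the claim. The fully coupled extension you sketch is unnecessary. It would also need an extra smallness condition on the softmax feedback, e.g.\ $\|W_{h_4}\,\beta(\mathrm{diag}(\hat r)-\hat r\hat r^\top)\,W_{h_4}^\top\|_2<1$ for the $(r,q)$ loop. The hypothesis $\Delta\ge\frac{1}{\beta}\ln\frac{(1-\epsilon)(N-1)}{\epsilon}$ does not by itself supply this when $\beta\epsilon$ is not small.
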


 The proof is deferred to Appendix~\ref{app:thm:separation}. Importantly, since context can control $\Delta$ through $\Delta_{\mathrm{gate}}$, including the right context $c$ can directly increase the separation gap between memories through its action on $s$. And because our retrieval component uses a LogSumExp Lagrangian, it takes the form of a standard MHN and inherits its exponentially large storage capacity and exponentially small retrieval error (\cite{ramsauer2020hopfield}) (refer to Appendix~\ref{sec:appendix_MHN} for details on MHNs). Consequently, the increased gap translates into exponential improvements in retrieval, as formalized below.
 
\begin{corollary}[Corollary of Theorems 4 \& 5 in \cite{ramsauer2020hopfield}; Informal]\label{cor:improve_context}
    An increase in the separation $\Delta_i$ between memories gives rise to an exponential improvement in the memory retrieval in one update. Similarly, an increase in $\Delta_i$ exponentially decreases the retrieval error.
\end{corollary}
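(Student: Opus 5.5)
The plan is to reduce the corollary to the one-step retrieval bounds for modern Hopfield networks in \cite{ramsauer2020hopfield}. The reduction treats the gate bias as a modification of the effective query--memory scores and then tracks how the separation enters those bounds.

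\textbf{Step 1: recast the retrieval layer as an MHN.} Hold the gate state $s$ fixed, which is justified by the timescale separation $\tau_c,\tau_s \ll \tau_r,\tau_q$. The fixed points of \eqref{eq:r_dyn}--\eqref{eq:q_dyn} then satisfy $r_\mu = \langle \zeta^\mu, q\rangle + \lambda s_\mu$ and $q = W_{h_4}\,\mathrm{softmax}(\beta r)$. This is exactly the MHN update $q^{\mathrm{new}} = X\,\mathrm{softmax}(\beta(X^\top q + b))$, with pattern matrix $X = W_{h_4}$ and a constant bias $b = \lambda s$.

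To absorb the bias, I would augment each pattern to $\tilde\zeta^\mu = (\zeta^\mu, \lambda s_\mu)$ and the query to $\tilde q = (q, 1)$. Then $\langle \tilde\zeta^\mu, \tilde q\rangle = \langle \zeta^\mu, q\rangle + \lambda s_\mu$. Following Ramsauer et al.'s definition, the separation of pattern $\mu$ from the others becomes
\[
\tilde\Delta_\mu = \langle \tilde\zeta^\mu,\tilde q\rangle - \max_{\nu\neq\mu}\langle\tilde\zeta^\nu,\tilde q\rangle \ge \Delta_{\mathrm{raw}} + \lambda\Delta_{\mathrm{gate}} = \Delta.
\]
This identification is the content of Theorem~\ref{thm:separation}.

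\textbf{Step 2: invoke the one-step convergence result.} Theorem~4 of \cite{ramsauer2020hopfield} states that if $q$ lies in the sphere $S_\mu$ of radius $M$ around $\tilde\zeta^\mu$, then after one update
\[
\|f(q)-\tilde\zeta^\mu\| \le 2\epsilon M, \qquad \epsilon = (N-1)\exp\!\bigl(-\beta(\Delta_\mu - 2\max\{\|q-\tilde\zeta^\mu\|,\|\tilde\zeta^\mu-x^*\|\}M)\bigr).
\]
I would substitute $\tilde\Delta_\mu \ge \Delta$. Since $\epsilon$ is monotone, increasing $\Delta$ by $\delta$ multiplies the bound by $e^{-\beta\delta}$, which is the exponential improvement in one update. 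Theorem~5 bounds the retrieval error $\|f(q)-x^*_\mu\|$ by the same quantity with the same exponential dependence on $\Delta_\mu$, so the second claim follows identically.

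\textbf{Main obstacle.} The hard part is that the augmentation changes the geometry the theorems rely on. Both the radius $M = \max_\nu\|\tilde\zeta^\nu\|$ and the sphere conditions that make $S_\mu$ invariant and well-separated shift by $O(\lambda\max_\nu|s_\nu|)$.

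I would handle this by bounding $|s_\nu|$ at the gate fixed point, using the damping in \eqref{eq:s_dyn} and $\hat r\in[0,1]$, which gives $|s_\nu|\le \|W_{h_1}c\|_\infty + \alpha\|W_{h_2}\|_\infty\|s\|_\infty + \lambda$. With this bound, $M$ grows only additively. The gain $e^{-\beta\lambda\Delta_{\mathrm{gate}}}$ then dominates the polynomial factor $2M$ whenever $\Delta_{\mathrm{gate}}>0$, which gives the informal claim.
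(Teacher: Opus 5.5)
Your Step~2 is essentially the paper's entire argument. Appendix~B restates Theorems~4 and~5 of \cite{ramsauer2020hopfield} and notes that both bounds carry the factor $\exp(-\beta\Delta_i)$, so raising $\Delta_i$ with everything else fixed improves them exponentially. For the informal statement as worded, that suffices.

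The genuine gap is in the bridge you add on top (Step~1 and the ``main obstacle''), specifically the claim that the gain dominates.

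First, in \cite{ramsauer2020hopfield} the separation is a pattern--pattern quantity, $\Delta_i=\min_{j\neq i}(x_i^\top x_i-x_i^\top x_j)$. The query enters only through the penalty $2\max\{\|\xi-x_i\|,\|x_i^*-x_i\|\}M$ in the exponent, whose sole role is to lower-bound the query-dependent logit gap by $\Delta_i$. Your $\tilde\Delta_\mu$ is already that query gap, yet you keep the penalty.

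Second, $M$ sits inside the exponent, not only in the prefactor $2M$. With $\tilde q=(q,1)$ and $\tilde\zeta^\nu=(\zeta^\nu,\lambda s_\nu)$ you have $M\ge\lambda\max_\nu|s_\nu|$ and $\|\tilde q-\tilde\zeta^\mu\|\ge|1-\lambda s_\mu|$. So for $s_\mu\neq0$ the penalty grows at least like $2\beta\lambda^2|s_\mu|\max_\nu|s_\nu|$, quadratically in $\lambda$, while your gain $\beta\lambda\Delta_{\mathrm{gate}}$ is only linear. Even the correct augmented separation, $\min_{\nu\neq\mu}\bigl[1-\langle\zeta^\mu,\zeta^\nu\rangle+\lambda^2 s_\mu(s_\mu-s_\nu)\bigr]$, cannot outpace the penalty. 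Its leading term is $\lambda^2 s_\mu\Delta_{\mathrm{gate}}$ when $s_\mu>0$, and $\Delta_{\mathrm{gate}}\le 2\max_\nu|s_\nu|$. Hence for large $\lambda$ the transferred bounds become vacuous rather than sharper. Bounding $|s_\nu|$ cannot fix this, because the obstruction is the $\lambda$-scaling itself.

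Third, the augmented MHN map returns last coordinate $\lambda s^\top p$, generally $\neq 1$. Its fixed point $x^*_\mu$ is therefore not a fixed point of the gated retrieval, so Theorem~4 does not transfer.

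The repair is to drop the augmentation. One gated update is $q^{\mathrm{new}}=\sum_\nu p_\nu\zeta^\nu$ with $p=\mathrm{softmax}(\beta r)$ and $r_\nu=\langle\zeta^\nu,q\rangle+\lambda s_\nu$. The gate bias enters only the logits, never the convex combination, so the output geometry $M=\max_\nu\|\zeta^\nu\|$ is untouched and
\[
\|q^{\mathrm{new}}-\zeta^\mu\|=\Bigl\|\sum_{\nu\neq\mu}p_\nu(\zeta^\nu-\zeta^\mu)\Bigr\|\le 2M(1-p_\mu)\le 2M(N-1)e^{-\beta\Delta},
\]
using the softmax computation in the proof of Theorem~\ref{thm:separation}. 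This is the argument behind Ramsauer et al.'s Theorem~5, with their lower bound $\Delta_i-2\|\xi-x_i\|M$ replaced by the exact gated gap $\Delta=\Delta_{\mathrm{raw}}+\lambda\Delta_{\mathrm{gate}}$. It delivers the $e^{-\beta\lambda\Delta_{\mathrm{gate}}}$ gain you were after, a link the paper itself only asserts by placing Theorem~\ref{thm:separation} next to the MHN bounds.

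Finally, you have the cited results swapped. The $2\epsilon M$ bound on $\|f(\xi)-x_i\|$ is Theorem~5. Theorem~4 is the Jacobian bound $\|f(\xi)-x_i^*\|\le 2\beta NM^2(N-1)e^{-\beta(\cdots)}\|\xi-x_i^*\|$, as restated in Appendix~B. The spheres $S_i$ in \cite{ramsauer2020hopfield} have radius $1/(\beta NM)$, not $M$.
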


Next we characterize how the gating mechanism sparsifies selection, even under the absence of any context-based biasing. This will serve to provide insights on our subsequent study on the interaction between context gating and retrieval feedback. To do this, we study the self-synapse $h_2$ in isolation by setting $\lambda=0$ (to exclude query influence) and a \emph{uniform input}, i.e. $u:=W_{h_1} \hat{c}_i = p\ \forall\ i$, so the context does not distinguish between memories. 

Notice then, that the fixed point equation for gate vector $s$ (Eq.~\ref{eq:s_dyn}) is of the form $A(\alpha)s^*=u$, with the linear operator
$
A(\alpha)\triangleq I+\alpha W_{h2}=(1-\alpha)I+\alpha G.
$
Here $G\in\mathbb{R}^{N\times N}$ is the Gram matrix of $N$ unit-norm memories $\{\zeta^\mu\}_{\mu=1}^N\subset\mathbb{R}^d$ with
$G_{ij}=\langle \zeta^i,\zeta^j\rangle$ and $\mathrm{rank}(G)=\min(N,d)=:R$. Such an assumption is easily satisfied with high probability under random sampling of memories. 

To analyze this, we consider the spectral decomposition of $G=\sum_{k=1}^{R}\mu_k\,v_kv_k^\top,\: \mu_1\ge\dots\ge \mu_R>0,$
with $\{v_k\}_{k=1}^R$ orthonormal. When $N>R$ (i.e., $N>d$), we can extend $\{v_k\}_{k=1}^R$ to an orthonormal basis
$\{v_k\}_{k=1}^N$ of $\mathbb{R}^N$ by choosing $\{v_k\}_{k=R+1}^N$ to span $\ker(G)$, with eigenvalue $\mu_k=0$ for $k>R$.
In all cases, we define the smallest eigenvalue $\mu_{\min}$ of $G$ and let $\mathcal{V}_{\min}:=\mathrm{span}\{v_k:\mu_k=\mu_{\min}\}$ denote the corresponding eigenspace, with orthogonal projector
$P_{\min}$. With this setup, we can now characterize our main result on sparsity due to $s(t)$ in softmax based retrieval.

\begin{theorem}[Critical Regime; Informal]\label{thm:wta}
    Let $\alpha_{\mathrm{crit}} \triangleq (1-\mu_{\mathrm{min}})^{-1}$. Let $s(t)$ be the state and $u$ the input. Further, let $p(t) = \mathrm{softmax}(\beta s(t))$ be the probability distribution induced by $s(t)$.

    \textbf{(i) Subcritical} ($\alpha < \alpha_{\mathrm{crit}}$) Then,
    $
    s^* = A(\alpha)^{-1}u = \sum_{k=1}^N \frac{c_k}{\eta_k(\alpha)}v_k, 
    $
    where $c_k$ depends on $v_k$ and the input $u$. Furthermore, the modes $v_k \in \mathcal{V}_{\mathrm{min}}$ are maximally amplified by a factor of $1/\eta_{\min}(\alpha)$ and $\eta_{\mathrm{min}}(\alpha) \to 0$ as $\alpha \to \alpha_{\mathrm{crit}}$.

    \textbf{(ii) Supercritical} ($\alpha > \alpha_{\mathrm{crit}}$) Then,
    $
    p(t) \to \delta_{i^*}(i),\ \text{where} \ i\ \in \{1,2,...,N\}\text{ as } t\to\infty,
    $
    where selected memory $i^*$ is some function of the state and input, along with the spectra of $G$.
\end{theorem}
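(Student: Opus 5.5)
Both regimes follow from the gate dynamics with $\lambda=0$ and uniform input. These are linear:
\[
\tau_s \dot s = u - A(\alpha)s, \qquad A(\alpha)=(1-\alpha)I+\alpha G .
\]
We diagonalize in the orthonormal eigenbasis $\{v_k\}_{k=1}^N$ of $G$ constructed above. Then $A(\alpha)v_k=\eta_k(\alpha)v_k$ with
\[
\eta_k(\alpha)=1-\alpha(1-\mu_k).
\]
Writing $s(t)=\sum_k a_k(t)v_k$ and $c_k=\langle v_k,u\rangle$, the system decouples into scalar ODEs
\[
\tau_s\dot a_k=c_k-\eta_k(\alpha)a_k,
\]
with explicit solution
\[
a_k(t)=\frac{c_k}{\eta_k}+\Bigl(a_k(0)-\frac{c_k}{\eta_k}\Bigr)e^{-\eta_k t/\tau_s}.
\]
Everything reduces to the signs of the $\eta_k$.

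\textbf{Subcritical regime.} We first check that $\mu_{\min}<1$ whenever the memories are not all identical. Since $\mathrm{tr}\,G=N$, the eigenvalues of $G$ average to $1$. Because $1-\mu_k\le 1-\mu_{\min}$, every $\eta_k(\alpha)\ge 1-\alpha(1-\mu_{\min})=:\eta_{\min}(\alpha)$. This is positive for $\alpha<\alpha_{\mathrm{crit}}$, and equality holds exactly on $\mathcal V_{\min}$. Hence $A(\alpha)$ is positive definite, every mode decays to its equilibrium, and $s(t)\to s^*=A(\alpha)^{-1}u=\sum_k (c_k/\eta_k)v_k$. This is the claimed formula, and the fixed point is globally asymptotically stable. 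The gain $1/\eta_k$ is maximal exactly for $v_k\in\mathcal V_{\min}$, and $\eta_{\min}(\alpha)\to0^+$ linearly as $\alpha\uparrow\alpha_{\mathrm{crit}}$. So the component $P_{\min}s^*$ blows up like $\|P_{\min}u\|/\eta_{\min}$ and dominates $s^*$ whenever $P_{\min}u\neq0$. If $P_{\min}u=0$, the dominant mode is the smallest $\eta_k$ with $c_k\ne0$; I would state this as a caveat.

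\textbf{Supercritical regime.} For $\alpha>\alpha_{\mathrm{crit}}$ we have $\eta_k<0$ for all $k$ with $\mu_k$ near $\mu_{\min}$, and in particular on $\mathcal V_{\min}$. These modes grow like $e^{|\eta_{\min}|t/\tau_s}$, strictly faster than every other mode, which either decays or grows at a smaller rate. Assume generically that $P_{\min}\bigl(s(0)-A^{-1}u\bigr)\neq0$. Then
\[
s(t)=e^{|\eta_{\min}|t/\tau_s}\bigl(w+o(1)\bigr), \qquad w:=P_{\min}\bigl(s(0)-A^{-1}u\bigr).
\]
Thus $\beta s(t)=\beta e^{\gamma t}(w+o(1))$ with $\gamma=|\eta_{\min}|/\tau_s$. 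Let $i^*=\arg\max_i w_i$, assuming the maximum is unique, which holds generically. For every $j\neq i^*$, the gap $\beta(s_{i^*}-s_j)\ge \beta e^{\gamma t}\bigl(w_{i^*}-w_j-o(1)\bigr)\to\infty$. Therefore $p_j(t)\le e^{-\beta(s_{i^*}-s_j)}\to0$ and $p(t)\to\delta_{i^*}$. The winner $i^*$ depends on the initial state, the input through $A^{-1}u$, and the eigenspace $\mathcal V_{\min}$, which matches the informal statement. If $\mathcal V_{\min}\subseteq\ker G$ (the case $N>d$), then $w$ lies in the kernel, and the argument is unchanged.

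\textbf{Main obstacle.} The delicate step is the supercritical regime. The linear model diverges, so $p(t)\to\delta_{i^*}$ is a statement about directions, and it needs genericity assumptions: a nonzero projection onto the fastest-growing space and a unique maximal coordinate of $w$. If $\mathcal V_{\min}$ contains a vector with tied maximal entries, for instance when it spans a symmetric subspace, the limit is uniform over the tied coordinates rather than a single delta. I would handle this by stating that the claim holds for Lebesgue-almost every $s(0)$. Ties $w_i=w_j$ define finitely many hyperplanes in the initial-condition space, which have measure zero. Optionally, I would remark that adding a saturating nonlinearity keeps the trajectory bounded while preserving the selected winner.
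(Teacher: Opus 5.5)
Your proof is correct and follows the paper's argument. You use the same modal decomposition of $\tau_s\dot s=u-A(\alpha)s$, the same dominant coefficient $w=P_{\min}\bigl(s(0)-A(\alpha)^{-1}u\bigr)$ (this is the paper's $\xi$, with $w_k=a_k(0)+c_k/\gamma_{\min}(\alpha)$), and the same bound $p_j\le e^{-\beta(s_{i^*}-s_j)}$. Your $o(1)$ bookkeeping actually covers all $\alpha>\alpha_{\mathrm{crit}}$, whereas the paper's claim that the non-minimal modes contribute bounded terms only holds while the other eigenvalues of $A(\alpha)$ stay positive.

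Two side remarks need fixing, though neither carries weight in the proof. First, $\mu_{\min}<1$ fails exactly when $G=I$ (mutually orthogonal memories), not when the memories are identical; identical memories give $\mu_{\min}=0$. Second, the genericity behind a unique $\arg\max_i w_i$ must be imposed on the memories rather than on $s(0)$. For example, if $\mathcal V_{\min}=\mathrm{span}\{v_N\}$, then $w$ is a multiple of $v_N$, so a repeated largest entry of $v_N$ yields a tie for every $s(0)$ that makes this multiple positive.
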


The proof relies on analyzing the spectra of $A(\alpha)$, which governs the simplified linear dynamics when $\lambda=0$. We leave the details to Appendix~\ref{app:thm:wta}. The key insight is that the selected index $i^*$ is governed by (a) the \emph{geometry} of the memory interactions through the eigenspace $\mathcal{V}_{\min}$ of $G$ corresponding to $\mu_{\min}$, and (b) the \emph{initialization/input bias} w.r.t. the space of stored memories. In what follows, we dive into the interaction between context gating and retrieval by considering $\lambda>0$.

To analyze the dynamics for $\lambda>0$, we focus on the subcritical regime, $A(\alpha)>0$. Because the decoupled gate subsystem ($\lambda=0$) is intrinsically stable in this regime (i.e. converging to a unique finite response $s^*$), it completely isolates the effect of cross-circuit coupling. Any target sharpening or retrieval selectivity that emerges as $\lambda$ increases is therefore strictly driven by the interaction between context gating and retrieval feedback, rather than intrinsic gate instability. The following result formalizes this intuition. It demonstrates that subcritical coupled fixed points are equivalent to those of a self-consistent softmax map, and that this fixed point is unique when the coupling is weak relative to the gate's stability margin.

\begin{theorem}[Self-consistent Retrieval]
\label{thm:self_consistent} Let $u:=W_{h_1}c$ and $b=(b_1,\ldots,b_N)^\top$, where $b_\mu:=\langle \zeta^\mu,q\rangle$.
Assume $A(\alpha)>0$ and define $\eta_{\min}(\alpha):=\lambda_{\min}(A(\alpha)).$ For any $\lambda>0$, every fixed point $(s^*,r^*)$ of the coupled gate-retrieval subsystem satisfies,
\begin{equation}
    s^*=A(\alpha)^{-1}(u+\lambda p^*),\quad r^*=b+ \lambda A(\alpha)^{-1} u + \lambda ^2 A(\alpha)^{-1} p^*,
    \label{eq:comb_fix_point}
\end{equation}
where $p^*=\mathrm{softmax}_\beta(r^*)$. Equivalently, $p^*$ is a fixed point of the self-consistent map,
\begin{equation}\label{eq:self-consistent-map}
    p^*=\Phi_{\alpha,\lambda}(p^*):= \mathrm{softmax}_\beta \left(b+ \lambda A(\alpha)^{-1} u+\lambda^2 A(\alpha)^{-1} p^* \right)
\end{equation}
Conversely, any fixed point $p^*=\Phi_{\alpha,\lambda}(p^*)$ induces a fixed point $(s^*,r^*)$ through \eqref{eq:comb_fix_point}. Furthermore, if $\frac{\beta\lambda^2}{2\eta_{\min}(\alpha)}<1$, the subystem has a unique fixed point.
\end{theorem}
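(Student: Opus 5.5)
The plan is to write out the stationarity conditions of the gate and retrieval equations directly, eliminate $s$, and then prove uniqueness with a contraction argument on the probability simplex. Throughout, the query (and hence $b$) and the context (hence $u$) are held fixed, since the statement concerns the gate–retrieval subsystem.

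\emph{Step 1 (fixed-point equations).} Setting $\dot s=0$ in \eqref{eq:s_dyn} and using $\hat s=s$, $\hat c=c$ and $\hat r=p:=\mathrm{softmax}_\beta(r)$ gives
\[
s = u - \alpha W_{h_2}s + \lambda p, \qquad\text{that is,}\qquad A(\alpha)s=u+\lambda p .
\]
Since $A(\alpha)>0$, it is invertible, so $s^*=A(\alpha)^{-1}(u+\lambda p^*)$.

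Setting $\dot r=0$ in \eqref{eq:r_dyn} gives $r^*=\lambda s^*+b$. Substituting the expression for $s^*$ yields the second identity in \eqref{eq:comb_fix_point}. Applying $\mathrm{softmax}_\beta$ to both sides of that identity gives $p^*=\Phi_{\alpha,\lambda}(p^*)$.

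For the converse, take any $p^*$ with $p^*=\Phi_{\alpha,\lambda}(p^*)$ and define $s^*$ and $r^*$ by \eqref{eq:comb_fix_point}. Then $\mathrm{softmax}_\beta(r^*)=p^*$ by construction. Reading Step 1 backwards shows that both right-hand sides of \eqref{eq:s_dyn} and \eqref{eq:r_dyn} vanish, so $(s^*,r^*)$ is a fixed point. The two maps, from fixed points to $p^*$ and back, are mutually inverse. Hence uniqueness of the subsystem's fixed point is equivalent to uniqueness of the fixed point of $\Phi_{\alpha,\lambda}$.

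\emph{Step 2 (contraction).} Write $\Phi_{\alpha,\lambda}=\sigma\circ g$, where $\sigma=\mathrm{softmax}_\beta$ and $g(p)=b+\lambda A^{-1}u+\lambda^2A^{-1}p$. The map $g$ is affine. Its Lipschitz constant in $\|\cdot\|_2$ is
\[
\lambda^2\|A^{-1}\|_2=\frac{\lambda^2}{\eta_{\min}(\alpha)},
\]
because $A$ is symmetric positive definite.

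The Jacobian of $\sigma$ at $r$ is $J=\beta(\mathrm{diag}(p)-pp^\top)$, which is symmetric positive semidefinite. For a unit vector $x$, the quadratic form $x^\top(\mathrm{diag}(p)-pp^\top)x$ equals $\mathrm{Var}_{i\sim p}(x_i)$. By Popoviciu's inequality this variance is at most $(\max_i x_i-\min_i x_i)^2/4$. For a unit vector, $(\max_i x_i-\min_i x_i)^2\le 2$, so $\|J\|_2\le \beta/2$.

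Since $\mathbb{R}^N$ is convex, the mean value inequality along segments shows that $\sigma$ is $(\beta/2)$-Lipschitz. Combining the two bounds, $\Phi_{\alpha,\lambda}$ is Lipschitz with constant
\[
\frac{\beta\lambda^2}{2\eta_{\min}(\alpha)}<1 .
\]

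\emph{Step 3 (conclusion).} The map $\Phi_{\alpha,\lambda}$ sends the closed probability simplex, which is a complete metric space, into itself. By Banach's fixed-point theorem it has exactly one fixed point there. Every fixed point of $\Phi_{\alpha,\lambda}$ must lie in the simplex, since it lies in the range of softmax. By Step 1, the subsystem therefore has exactly one fixed point.

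\emph{Main obstacle.} Steps 1 and 3 are routine. The only nontrivial estimate is the sharp constant $1/2$ in the bound on the softmax Jacobian, which the variance identity together with Popoviciu's inequality supplies. I would also check two points of bookkeeping: that the $\hat{r}_i$ term in \eqref{eq:s_dyn} is the same softmax as in $p^*$, and that the operator norm of $A^{-1}$ is exactly $1/\eta_{\min}(\alpha)$ because $A$ is symmetric.
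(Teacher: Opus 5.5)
Your proposal is correct and follows the paper's proof essentially step for step. It uses the same elimination of $s^*$ from the stationarity conditions, the same variance-plus-Popoviciu bound $\|\beta(\mathrm{diag}(p)-pp^\top)\|_2\le\beta/2$, the same identity $\|A(\alpha)^{-1}\|_2=1/\eta_{\min}(\alpha)$, and Banach's theorem on the simplex; the only cosmetic differences are that you bound the Lipschitz constant of softmax on all of $\mathbb{R}^N$ before composing with the affine map (the paper integrates the chain-rule Jacobian along a segment in the simplex), and that you spell out the bijection between subsystem fixed points and fixed points of $\Phi_{\alpha,\lambda}$, which the paper leaves implicit.
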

The proof relies on manipulating equations \ref{eq:s_dyn} and \ref{eq:r_dyn} in their equilibrium states and on showing that $\Phi_{\alpha,\lambda}$ is a contraction mapping. We leave the details of the proof to Appendix~\ref{app:self_consistent}. Theorem \ref{thm:self_consistent} reveals that, in the subcritical regime, positive coupling reshapes retrieval through the effective logit decomposition:
\begin{equation*}
        r^*
    =
    \underbrace{b}_{\text{query evidence}}
    +
    \underbrace{\lambda A(\alpha)^{-1} u}_{\text{context-filtered bias}}
    +
    \underbrace{\lambda^2 A(\alpha)^{-1} p^*}_{\text{retrieval-gate feedback}}.
\end{equation*}
The first term is the usual query-memory similarity. The second term shows how context biases retrieval after being filtered through the stable gate operator $A(\alpha)^{-1}$. The third term is a self-consistent feedback term: retrieval probabilities feed back into the gates, and the gates in turn bias retrieval. Thus, $\lambda$ has two distinct effects: a first-order direct context effect and a second-order feedback effect.

The condition $\frac{\beta\lambda^2}{2\eta_{\min}(\alpha)}<1$ also has a natural interpretation. The quantity $\eta_{\min}(\alpha)$ measures the stability margin of the gate subsystem. As $\alpha\to\alpha_{\mathrm{crit}}$, this margin shrinks, so even small values of $\lambda$ can substantially affect retrieval. Thus, context coupling becomes more influential near the gate critical regime. At the same time, the theorem shows that when $\lambda$ is sufficiently weak relative to this stability margin, the coupled system admits a unique self-consistent retrieval state. This separates two phenomena: below the contraction threshold, context smoothly biases retrieval, whereas beyond it, retrieval-gate feedback can bias selection or winner-take-all behavior.

\subsection{Simulations}\label{sec:experiments}
We validate the properties of our framework through synthetic experiments. First, we look at how context coupling affects retrieval, by measuring how varying $\lambda$ impacts $\Delta$, and track the resulting changes in retrieval accuracy and converged probabilities. Then, we study the sparsity-inducing mechanism in isolation by empirically verifying the phase transition characterized in Theorem~\ref{thm:wta}. Due to space constraints, we defer the exact experimental setups to the Appendix.

\textbf{Context-Augmented Memory Separation.} Figure~\ref{fig:separation}(a) shows retrieval accuracy as a function of query noise. The accuracy is non-decreasing in $\lambda$, with the largest gains in the intermediate noise regime. Context does not eliminate error at extreme noise, but it substantially extends the operational noise range. Figure~\ref{fig:separation}(b) plots retrieval probability against $\Delta$ for all trials. All empirical points lie on or above the exact logistic lower bound (red curve), confirming Theorem~\ref{thm:separation}. The primary effect of increasing $\lambda$ is to shift the point cloud rightward along the $\Delta$-axis toward larger gaps. Figure~\ref{fig:separation}(c) makes this effect explicit: the box plots show the distribution of $\Delta$ at fixed query noise $\sigma_q=1.0$ for each $\lambda$. As $\lambda$ increases, the entire gap distribution shifts upward, with higher $\lambda$ values pushing the median gap well above the exact threshold required.

\begin{figure}
    \centering
    \includegraphics[width=0.8\linewidth]{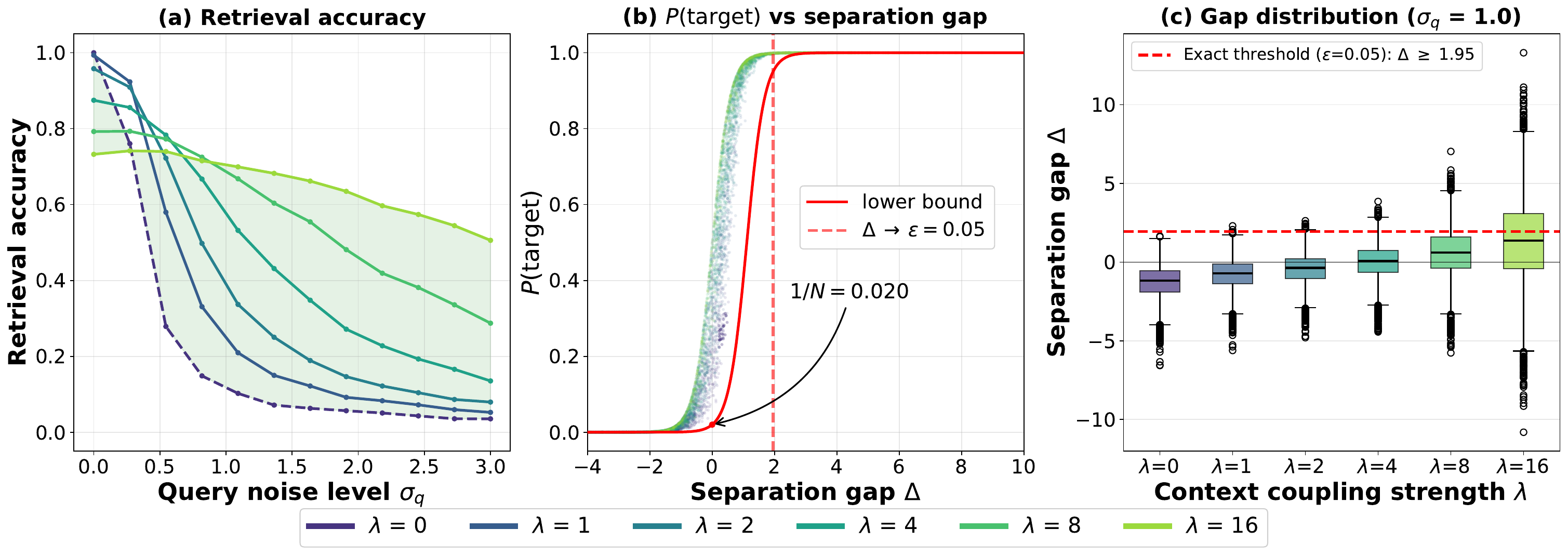}
    \caption{Context-augmented memory separation. \textbf{(a)}~Retrieval accuracy vs. query noise. \textbf{(b)}~Retrieval probability vs. effective separation gap $\Delta$.  
    \textbf{(c)}~Distribution of $\Delta$ at $\sigma_q=1.0$ for each $\lambda$.
    }
    \label{fig:separation}
\end{figure}

\textbf{Phase Transition Experiments} Figure~\ref{fig:competition}(a) shows that the peak gate probability $\max_i p^*_i$ undergoes a sharp transition at $\alpha_{\mathrm{crit}}$: for $\alpha < \alpha_{\mathrm{crit}}$, it stays near $1/N$, consistent with the spectral filtering regime where $1/\eta_N(\alpha)$ remains finite; for $\alpha > \alpha_{\mathrm{crit}}$, it saturates near $1.0$, matching the WTA limit of Theorem~\ref{thm:wta}. Figure~\ref{fig:competition}(b) resolves the transition region $[\alpha_{\mathrm{crit}},\, 1.18]$, revealing a smooth but steep curve. An example gate distribution at $\alpha = \alpha_{\mathrm{crit}}$ (Figure~\ref{fig:competition}(c)) shows the onset of selectivity: multiple memories emerge above the uniform baseline, reflecting the divergent amplification of the least-correlated mode $v_N$ as $\eta_N \to 0^+$. Figure~\ref{fig:competition}(d) showcases the transition of the critical region as $\lambda$ increases. As $\lambda$ increases the empirical $\alpha_\mathrm{crit}$ shifts to smaller values. While this agrees with our preliminary results in Appendix \ref{app:thm:lambda}, we emphasize that our theoretical threshold does not match the empirical threshold. Nevertheless, this suggests that context gating indeed aids retrieval.

\begin{figure}[ht]
    \centering
    \includegraphics[width=\textwidth]{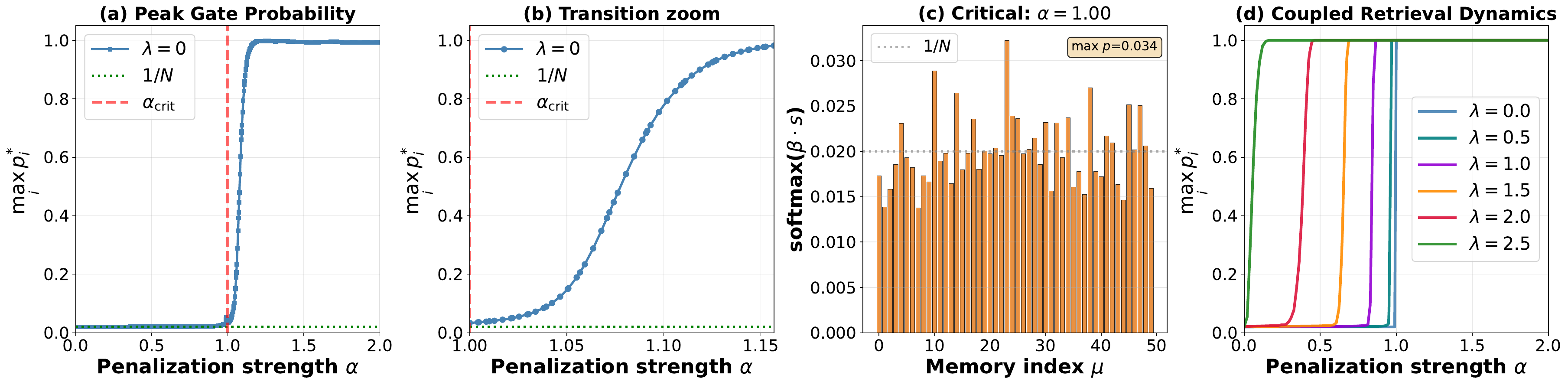}
    \caption{Phase transition in gate selectivity.
    \textbf{(a)} Peak gate probability vs. penalization strength ($\lambda = 0$).
    \textbf{(b)} Zoom on the transition region ($\lambda = 0$).
    \textbf{(c)} Example gate distribution at $\alpha = \alpha_{\mathrm{crit}}$ ($\lambda = 0$). \textbf{(d)} Peak gate probability vs. penalization strength for varying $\lambda$.}\label{fig:competition}
\end{figure}

\section{Connection to Transformers}

Having formalized context-dependent retrieval in our associative
memory, we question the extent to which this structure is reflected in the internal computations of the LLMs that inspired it. \cite{yangCZI2025} shows that hidden-state geometry in transformer ICL exhibits a two-phase structure: \emph{separability} (class-level clustering) emerging in early-to-mid layers, \emph{alignment} (label-specific directionality) sharpening
later, paralleling the timescale separation in our architecture, where the context circuit ($\tau_c, \tau_s \ll \tau_r, \tau_q$) settles before retrieval.\\
Note that embedding a full LLM within our framework is not feasible, given that our abstraction operates on limited-capacity representations as opposed to billions of parameters.%
\footnote{Scaling associative memories to match, e.g., via distributed representations with exponential capacity~(\cite{kafrajKL2026}), while possible, is beyond our scope.} Instead, we assume access to pretrained representations and propose to analyze the retrieval component in isolation. This is grounded in the linear representation hypothesis~(\cite{parkCV2024}) and work on in-context~(\cite{liuYXZ2024}) and task vectors~(\cite{dongJZN2025}), which
establish that high-level concepts are encoded as directions in
representation space and that demonstration effects can be distilled into additive vectors, thus supporting our decomposition of hidden states into a query term plus an additive context signal.

\subsection{Setup}

To instantiate our framework within a pretrained transformer, we first identify each architectural component with an extractable quantity. \\
\textbf{Memory bank}: A natural choice for the stored patterns 
$\{\zeta^\mu\}_{\mu=1}^N$ is the rows of the unembedding matrix 
$W_U \in \mathbb{R}^{N \times d}$, since the output logit for token $\mu$ is 
the inner product $\langle \zeta^\mu, h \rangle$, directly mirroring the retrieval 
score in our architecture.\\
\textbf{Context-gate projection:} Recall that the gate activation for memory 
$\mu$ is $s_\mu = (W_{h_1} c)_\mu$. Writing $W_{h_1} = Z^\top P$ where 
$Z = W_U$ and $P \in \mathbb{R}^{d \times d_c}$ is a learned projection, we 
obtain $s_\mu = \langle \zeta^\mu, Pc \rangle$: the gate scores each memory by 
its similarity to the projected context. As a first-order simplification, we 
set $P = I$, so that gate activations reduce to direct inner products between 
context and memory vectors. Learning $P \neq I$ is an important direction that we defer to future work.\\
\textbf{Context and query extraction:} For each query $i$, we extract hidden states with demonstrations ($h_{\mathrm{ICL},i}^{(\ell)}$) and without ($h_{\mathrm{zero},i}^{(\ell)}$) at each layer $\ell$. The query is 
$q_i = h_{\mathrm{zero},i}^{(\ell)}$. For the context, following 
function vector methodology of~\cite{toddLSMWB2024}, we average the additive effect of demonstrations across queries:
$$
\bar{c}^{(\ell)} = \frac{1}{n}\sum_{i=1}^{n} 
\bigl(h_{\mathrm{ICL},i}^{(\ell)} - h_{\mathrm{zero},i}^{(\ell)}\bigr),
$$
which removes any query-specific information to yield a shared task-level signal. The 
retrieval score for memory $\mu$ is then simply
\begin{equation}
    r_\mu = \langle \zeta^\mu, q_i \rangle + \lambda\, \langle \zeta^\mu, \bar{c}^{(\ell)} \rangle
    \label{eq:empirical_retrieval}
\end{equation}
mirroring \eqref{eq:r_dyn}, where $\lambda$ controls the coupling between context and retrieval.

In terms of the task, our testbed is in-context learning: a set of demonstrations (the context)
precedes a query, and the model must predict the correct target token. We use Llama-3-8B (32 layers) on SST-2 ($K{=}2$) \cite{socherPWCMNP2013}, AG-News
($K{=}4$) \cite{zhangZL2016}, and TREC ($K{=}6$) \cite{liT2002} for
classification, where $L$ denotes the label-set size; and on LAMA
\cite{petroniRLBWMR2019} for open-ended factual retrieval. For brevity, we focus the main text on the binary ICL classification task on SST-2; results on AG-News and TREC follow similar trends and are reported in App.~\ref{app:add_datasets}. Finally, we sweep the coupling $\lambda$ throughout.

\textbf{Remark on Empirical Simplifications:} We adapt our continuous-time equilibrium model (Theorem~\ref{thm:self_consistent}) to feed-forward transformers via two structural simplifications that lead to Eq.~\ref{eq:empirical_retrieval}. First, we set the sparsity penalty $\alpha=0$, which reduces the gate operator $A(\alpha)$ to the identity matrix. While our theory relies on $\alpha$ to regularize the memory Gram matrix $G = W_U W_U^\top$, the heavily skewed spectrum of actual LLM unembedding matrices makes the stable $\alpha$ regime practically degenerate. Furthermore, native transformer dynamics directly induce this geometric sparsification without explicit regularization (as seen via the $N_{\mathrm{eff}}$ collapse in Fig.~\ref{fig:neff_layers}). Second, because transformers lack single-layer recurrence, we drop the second-order feedback term ($\lambda^2$). Together, these reductions naturally yield our linear empirical score, isolating the immediate, first-order contextual bias ($\lambda$) while leaving complex non-linear feedback to be handled implicitly.

\subsection{Experiments}

(i) We first examine how the transformer itself processes demonstrations,
without imposing our additive decomposition. We decode the hidden states $h_{\mathrm{zero}}^{(\ell)}$ and $h_{\mathrm{ICL}}^{(\ell)}$ directly through the unembedding at each layer to obtain distributions over memories $\{\zeta^\mu\}$, tracking how the effective number of active memories $N_\mathrm{eff}$ evolves across layers and shot counts. The contrast between the two reveals the layers at which native ICL processing collapses the memory space onto the label set, providing a guidance on where to extract informative context representations. \\
\textbf{Takeaway}: Native transformer dynamics implicitly execute the geometric sparsification and timescale separation predicted by our architecture (Fig.~\ref{fig:neff_layers}). The zero-shot query remains broadly distributed, whereas the introduction of context induces a sharp $N_{\mathrm{eff}}$ collapse (onto the label set) at later layers. This demonstrates that transformers natively resolve the fast-context/slow-retrieval separation, functionally replacing our theoretical requirement for an explicit sparsity penalty ($\alpha$), and aligns with recent findings on representational geometry \cite{yangCZI2025} (see App.~\ref{app:replicate_icl_geom} for the replication).

\begin{figure}[ht]
\centering
\includegraphics[width=0.8\linewidth]{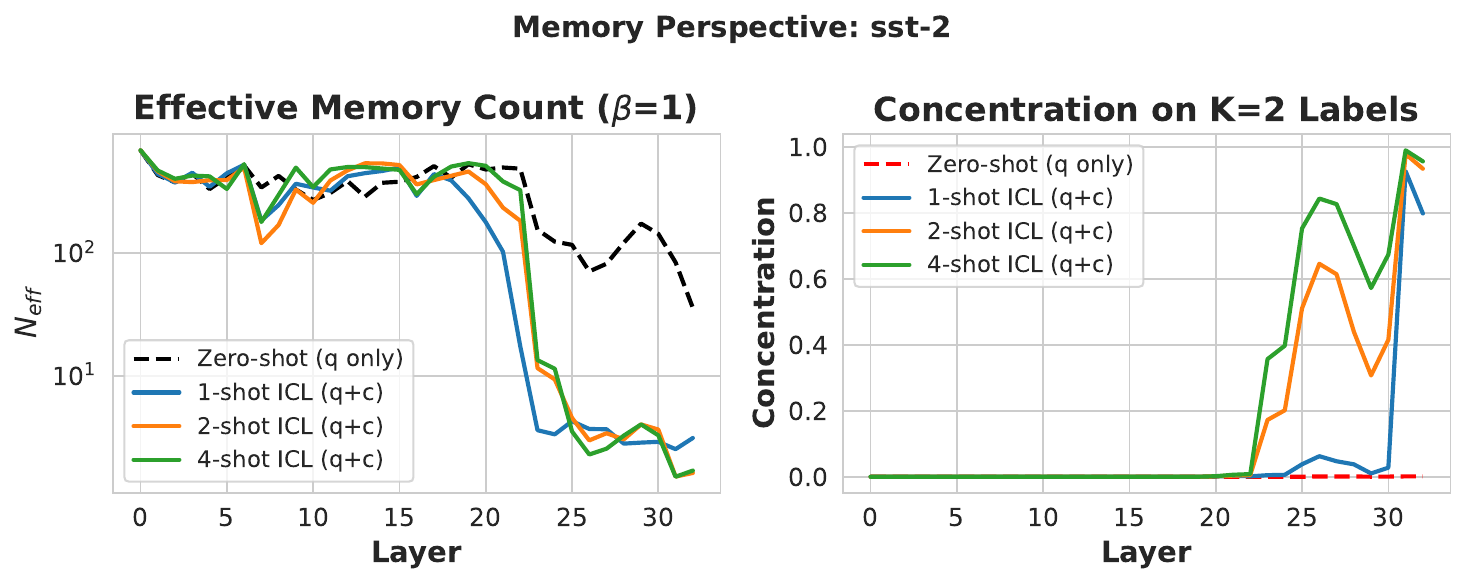}
\caption{\textbf{Native ICL processing collapses the memory space onto the
label set.} Effective number of active memories $N_\mathrm{eff}$ obtained
by decoding $h_{\mathrm{zero}}^{(\ell)}$ and $h_{\mathrm{ICL}}^{(\ell)}$
through the unembedding at each layer, shown across layers and shot
counts on SST-2.}
\label{fig:neff_layers}
\end{figure}

(ii) We next evaluate our additive retrieval score by extracting averaged context $\bar{c}^{(\ell)}$ from different layers and pairing it with the zero-shot query $h_{\mathrm{zero}}^{(\ell')}$, sweeping $\lambda$ to
identify which (context-layer, query-layer) combinations yield the
strongest retrieval and how this scales with shots. \\
\textbf{Takeaway}: Fig~\ref{fig:retrieval_sweep} is arguably our strongest result, showing how additive energy landscape reshaping drives in-context learning. Despite the first-order approximation of our fixed-point theory, the simple linear combination of an isolated context vector and a zero-shot query achieves up to $\sim$86\% accuracy, successfully capturing the vast majority of the predictive power of full ICL ($\sim$95\%). This confirms that context primarily acts as a structural prior that primes the retrieval space, with more informative contexts yielding sharper retrieval. 
 
\begin{figure}[ht]
\centering
\begin{subfigure}{\linewidth}
\centering
\includegraphics[width=0.8\linewidth]{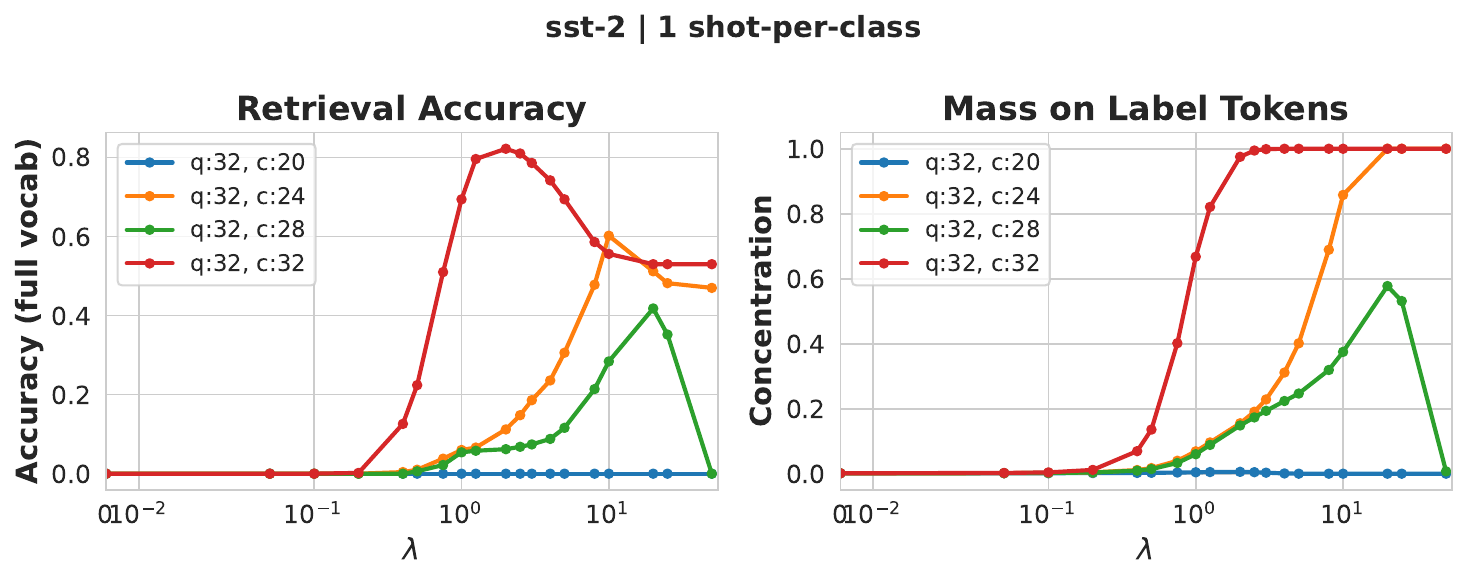}
\label{fig:retrieval_a}
\end{subfigure}
\begin{subfigure}{\linewidth}
\centering
\includegraphics[width=0.8\linewidth]{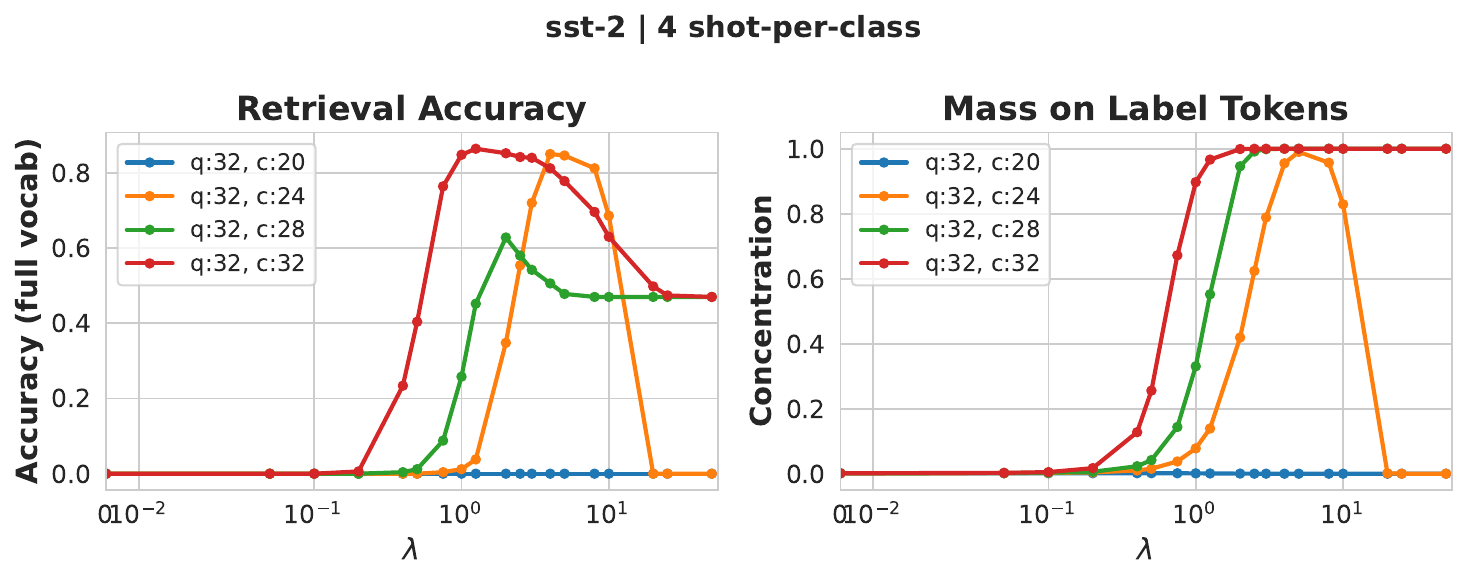}
\label{fig:retrieval_b}
\end{subfigure}
\caption{\textbf{Coupled retrieval across context-query layer
combinations.} Retrieval accuracy as a function of $\lambda$, context
extraction layer $\ell$, and number of shots (1 and 4) on SST-2.}
\label{fig:retrieval_sweep}
\end{figure}

(iii) Finally, we extend the analysis to open-ended factual retrieval. Each query is a fact-completion prompt of the form ``the capital
of France is''; we extract its zero-shot representation at the final
layer ($\ell{=}32$) and construct two contrasting context signals from
$k$-shot demonstrations: a positive $\bar{c}_{+}^{(\ell)}$ built from
correct subject-object pairs, and a negative $\bar{c}_{-}^{(\ell)}$ built
from deliberately wrong ones. Holding the query fixed and sweeping the
context layer over $\ell \in \{0, \ldots, 32\}$ at three values of
$\lambda$, we compare how supportive and adversarial context shape
retrieval over the memory bank. \\
\textbf{Takeaway}: Additive context bidirectionally steers the retrieval energy landscape. Consistent with our associative memory framework and prior work on task steering vectors \cite{liuYXZ2024, dongJZN2025}, injecting supportive context deepens the target's energy minima (sharpening retrieval), while adversarial context actively shifts the distribution mass toward competing attractors (Fig.~\ref{fig:open_fr}). This provides an interesting associative memory perspective on the empirical success of task steering vectors.

\begin{figure}[ht]
\centering
\includegraphics[width=0.8\linewidth]{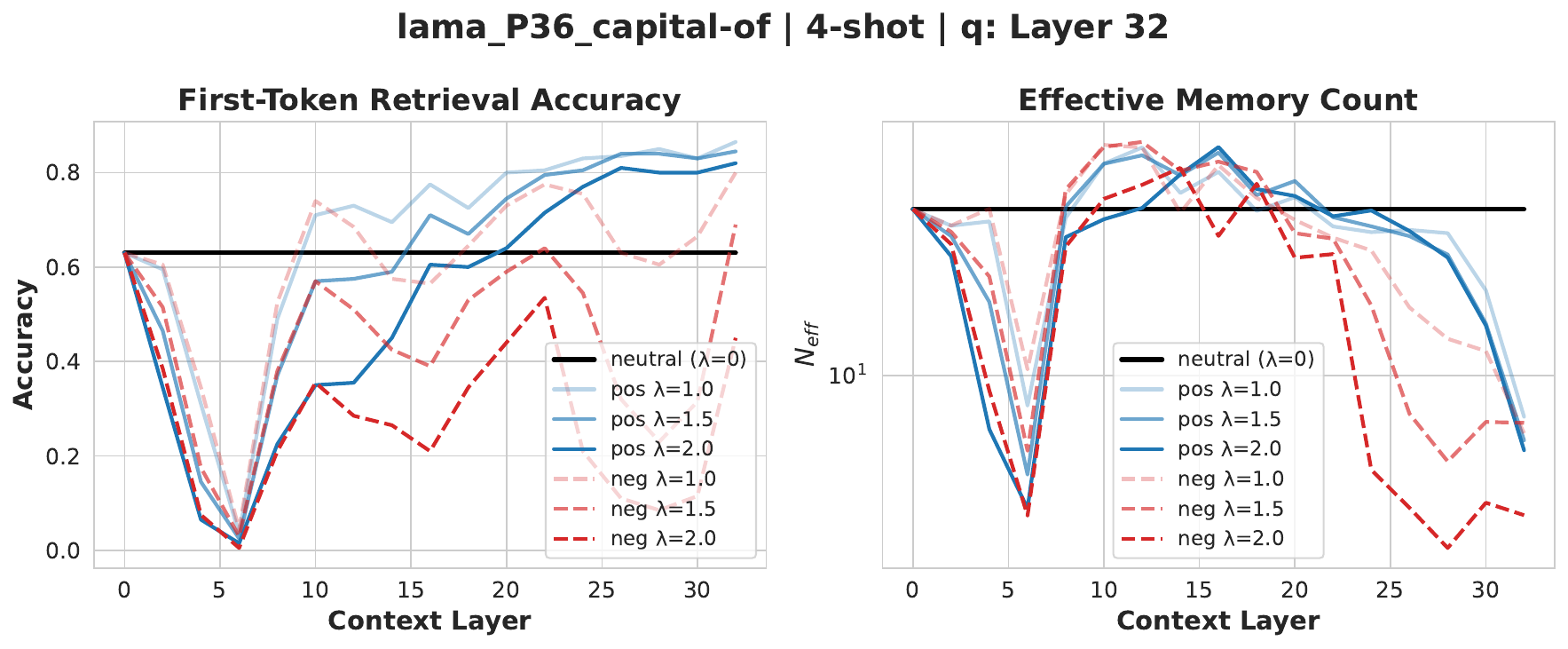}
\caption{\textbf{Additive context bidirectionally steers factual
retrieval.} Retrieval over the memory bank for a LAMA query held at
layer $\ell{=}32$, paired with positive ($\bar{c}_{+}^{(\ell)}$, correct demonstrations) and negative ($\bar{c}_{-}^{(\ell)}$, wrong
demonstrations) context signals extracted across $\ell \in \{0, \ldots,
32\}$ at three values of $\lambda$.}
\label{fig:open_fr}
\end{figure}

\section{Discussion}
\label{sec:discussion}

In this work, we introduced a two-stage associative memory architecture in which a context-gate subcircuit dynamically reshapes the retrieval energy landscape before and during recall. Our mathematical analysis establishes three core properties of this framework. First, context coupling directly increases the effective separation gap between memories, which translates into exponential gains in retrieval accuracy. Second, the isolated gate subsystem features a critical phase transition that natively induces winner-take-all sparsity. Finally, under subcritical coupling, the bidirectional system admits a unique self-consistent fixed point. This fixed point cleanly decomposes the retrieval state into raw query evidence, a direct context-filtered bias, and a non-linear retrieval-gate feedback loop, isolating the distinct mechanisms of contextual influence. Beyond theory, we bridged this framework to the internal representations of LLMs. By evaluating a first-order approximation of our model, we demonstrated that the native geometric dynamics of transformers functionally mirror our proposed architecture, utilizing context to localize the searchable memory subspace before the query discriminates within it.

\textbf{Limitations and Future Work.}
While our theory guarantees the existence of a unique self-consistent fixed point, our current analysis does not fully characterize the continuous-time trajectory of the system, meaning we lack strict mathematical guarantees of convergence from arbitrary initializations. A highly promising theoretical direction is to pose the joint context-retrieval optimization as an entropy maximization problem with score constraints. From this perspective, the coupled dynamics could be analyzed as a mirror descent algorithm, potentially yielding rigorous convergence guarantees via Lyapunov stability and drawing deep connections between our context gating and deterministic annealing. 

Empirically, our transformer evaluation isolates the first-order contextual bias but omits the second-order feedback loop ($\lambda^2$), restricting our testbed to single-token outputs in classification and simple factual recall. Future work must extend this framework to multi-token autoregressive generation and highly complex, open-ended contexts. For instance, analyzing how behavioral system prompts or persona adoption (e.g., ``act as a liar") continuously reshape the energy landscape over long horizons would provide a rigorous associative memory foundation for complex LLM phenomenology. Finally, learning a dedicated context-gate projection matrix ($P \neq I$) rather than relying on direct inner products remains a vital next step to fully operationalize our architecture in pre-trained models.

\section*{Acknowledgements}

We are thankful to Dmitry Krotov for early discussions on modeling context-based retrieval.

\bibliography{nfam2026_workshop}
\bibliographystyle{plain}

\newpage
\appendix

\section{Primer: General Dense Associative Memories}
\label{sec:appendix_primer_gdam}

\subsection{Modular Energy Framework}

Here, we provide a brief introduction on the generalized abstraction of Energy-based AMs (HAMUX), introduced by \cite{krotovHPP2025}. At a high level, dense associative memories (DAMs) are composed of two major components: \textbf{neuron layers} (nodes) and \textbf{hypersynapses} (hyperedges). A neuron layer captures a non-linearity in the network (or activations such as ReLU), while a hypersynapse is a parameterized energy function that captures how similar or aligned the activations of its connected neuron layers are. The DAM is then described by summing the modular energies of all these components. The following exposition provides a gentle introduction. For a formal treatment, we point the reader to ~\cite{krotov2021ham, krotovHPP2025}.

\subsubsection{Total Energy}

For a system with $L$ neuron layers and $S$ hypersynapses, the total energy is:
\begin{equation}
E_{\text{total}} = \sum_{\ell=1}^{L} E_{\ell}^{\text{neuron}} + \sum_{s=1}^{S} E_{s}^{\text{synapse}}
\end{equation}

\subsubsection{Local Update Rule}

Let $\hat{\mathbf{x}}_\ell$ and $\mathbf{x}_\ell$ denote the activations and internal states of neuron layer $\ell$. Let $\mathcal{N}(\ell)$ be the set of hypersynapses connected to layer $\ell$. The internal states minimize total energy via:
\begin{equation}
\tau_\ell \frac{d\mathbf{x}_\ell}{dt} = -\frac{\partial E_{\text{total}}}{\partial \hat{\mathbf{x}}_\ell} = -\left(\sum_{s \in \mathcal{N}(\ell)} \frac{\partial E_s^{\text{synapse}}}{\partial \hat{\mathbf{x}}_\ell}\right) - \frac{\partial E_\ell^{\text{neuron}}}{\partial \hat{\mathbf{x}}_\ell} = \mathbf{I}_{\mathbf{x}_\ell} - \mathbf{x}_\ell
\end{equation}
where $\mathbf{I}_{\mathbf{x}_\ell} := -\sum_{s \in \mathcal{N}(\ell)} \nabla_{\hat{\mathbf{x}}_\ell} E_s^{\text{synapse}}$ is the total synaptic input current and $\tau_\ell$ is the time constant.

\subsection{Dynamical Neurons and Lagrangians}

\begin{definition}[Neuron Layer]
A neuron layer has internal state $\mathbf{x}$ and activation $\hat{\mathbf{x}}$, defined via a convex Lagrangian $\mathcal{L}_\mathbf{x}(\mathbf{x})$ and its Legendre transform $\mathcal{T}$:
\begin{align}
\hat{\mathbf{x}} &= \nabla \mathcal{L}_\mathbf{x}(\mathbf{x}) \quad \text{(activation function)} \\
E_\mathbf{x}(\hat{\mathbf{x}}) &= \mathcal{T}[\mathcal{L}_\mathbf{x}] = \langle \mathbf{x}, \hat{\mathbf{x}} \rangle - \mathcal{L}_\mathbf{x}(\mathbf{x}) \quad \text{(dual energy)}
\end{align}
where $\langle \cdot, \cdot \rangle$ is the element-wise inner product.
\end{definition}

\textbf{Key Property:} The energy gradient equals the hidden state:
\begin{equation}
\frac{\partial E_\mathbf{x}(\hat{\mathbf{x}})}{\partial \hat{\mathbf{x}}} = \mathbf{x}
\end{equation}

\textbf{Proof:}
\begin{align*}
\frac{\partial E_\mathbf{x}(\hat{\mathbf{x}})}{\partial \hat{\mathbf{x}}} &= \frac{\partial}{\partial \hat{\mathbf{x}}} \left(\langle \mathbf{x}, \hat{\mathbf{x}} \rangle - \mathcal{L}_\mathbf{x}(\mathbf{x})\right) \\
&= \mathbf{x} + \hat{\mathbf{x}} \frac{\partial \mathbf{x}}{\partial \hat{\mathbf{x}}} - \frac{\partial \mathcal{L}_\mathbf{x}(\mathbf{x})}{\partial \mathbf{x}} \frac{\partial \mathbf{x}}{\partial \hat{\mathbf{x}}} \\
&= \mathbf{x} + \hat{\mathbf{x}} \frac{\partial \mathbf{x}}{\partial \hat{\mathbf{x}}} - \hat{\mathbf{x}} \frac{\partial \mathbf{x}}{\partial \hat{\mathbf{x}}} = \mathbf{x}
\end{align*}

This implies exponential decay in isolation:
\begin{equation}
\frac{d\mathbf{x}}{dt} = -\nabla_{\hat{\mathbf{x}}} E_\mathbf{x}(\hat{\mathbf{x}}) = -\mathbf{x}
\end{equation}

\subsection{Hypersynapses}

\begin{definition}[Hypersynapse]
A hypersynapse is a scalar-valued energy function defined on the activations of connected neuron layers. For layers $X$ and $Y$ with activations $\hat{\mathbf{x}}$ and $\hat{\mathbf{y}}$:
\begin{equation}
E_{xy}(\hat{\mathbf{x}}, \hat{\mathbf{y}}; \Xi)
\end{equation}
where $\Xi$ represents learnable synaptic weights. Low energy indicates activations satisfy the relationship encoded by $\Xi$.
\end{definition}

\textbf{Example:} Dense hypersynapse: $E_{\text{Dense}}(\hat{\mathbf{x}}, \hat{\mathbf{y}}; \Xi) = -\hat{\mathbf{x}}^\top \Xi \hat{\mathbf{y}}$

\textbf{Key Properties:}
\begin{itemize}
\item Hypersynapses connect any number of layers (hyperedges)
\item Undirected: all connected layers influence each other bidirectionally
\item Signal to layer $X$: $\mathbf{I}_x = -\nabla_{\hat{\mathbf{x}}} E_{xy}(\hat{\mathbf{x}}, \hat{\mathbf{y}}; \Xi)$
\item Signal to layer $Y$: $\mathbf{I}_y = -\nabla_{\hat{\mathbf{y}}} E_{xy}(\hat{\mathbf{x}}, \hat{\mathbf{y}}; \Xi)$
\end{itemize}

\textbf{Hypersynapse Notation Conventions:}

For synapses connecting multiple layers, we subscript with the identifiers of all connected layers:
\begin{itemize}
\item $E_{xy}$ — synapse connecting layers $X$ and $Y$
\item $E_{xyz}$ — synapse connecting layers $X$, $Y$, and $Z$
\item $E_{xyz\ldots}$ — synapses connecting more than three layers (possible but rare)
\end{itemize}

\textbf{Self-connections:} To avoid confusion with neuron layer energy $E_\mathbf{x}$, we use curly brackets for synaptic self-connections:
\begin{itemize}
\item $E_{\{x\}}$ — interaction energy of a synapse connecting layer $X$ to itself
\end{itemize}

Note: Since almost every interaction energy is parameterized, we generally omit $\Xi$ from notation when not central to the discussion.

\subsection{Energy Descent Dynamics}

\begin{theorem}[Guaranteed Energy Descent]
The dynamics in Eq. (2) decrease the global energy:
\begin{equation}
\frac{dE_{\text{total}}}{dt} = \sum_{\ell=1}^{L} \frac{\partial E_{\text{total}}}{\partial \hat{\mathbf{x}}_\ell} \frac{\partial \hat{\mathbf{x}}_\ell}{\partial \mathbf{x}_\ell} \frac{d\mathbf{x}_\ell}{dt} = -\sum_{\ell=1}^{L} \tau_\ell \frac{d\mathbf{x}_\ell}{dt} \frac{\partial^2 \mathcal{L}_\mathbf{x}}{\partial \mathbf{x}_\ell \partial \mathbf{x}_\ell} \frac{d\mathbf{x}_\ell}{dt} \leq 0
\end{equation}
\end{theorem}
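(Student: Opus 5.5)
The plan is a direct chain-rule computation with all energies written as functions of the internal states $\mathbf{x}_\ell$. This sidesteps the question of whether $\hat{\mathbf{x}}_\ell\mapsto\mathbf{x}_\ell$ is invertible, which fails for the LogSumExp layer because of shift invariance. Write $H_\ell(\mathbf{x}_\ell):=\partial^2\mathcal{L}_\mathbf{x}/\partial\mathbf{x}_\ell\partial\mathbf{x}_\ell$. Then $\hat{\mathbf{x}}_\ell=\nabla\mathcal{L}_\mathbf{x}(\mathbf{x}_\ell)$ gives $\partial\hat{\mathbf{x}}_\ell/\partial\mathbf{x}_\ell=H_\ell$, which is symmetric and positive semidefinite by convexity of $\mathcal{L}_\mathbf{x}$. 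Note that $H_\ell=I$ for the quadratic layers and $H_\ell=\beta(\mathrm{diag}(\hat r)-\hat r\hat r^\top)$ for the retrieval layer.

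\textbf{Step 1: gradient of each block.} For a neuron layer, I will differentiate $E_\ell^{\text{neuron}}=\langle\mathbf{x}_\ell,\nabla\mathcal{L}_\mathbf{x}(\mathbf{x}_\ell)\rangle-\mathcal{L}_\mathbf{x}(\mathbf{x}_\ell)$ directly in $\mathbf{x}_\ell$. This gives $\nabla\mathcal{L}_\mathbf{x}+H_\ell\mathbf{x}_\ell-\nabla\mathcal{L}_\mathbf{x}=H_\ell\mathbf{x}_\ell$. It is the Key Property $\partial E_\mathbf{x}/\partial\hat{\mathbf{x}}=\mathbf{x}$ composed with the Jacobian $H_\ell$, but it is valid without inverting $\nabla\mathcal{L}_\mathbf{x}$. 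Each synaptic energy depends on $\mathbf{x}_\ell$ only through $\hat{\mathbf{x}}_\ell$, so its gradient in $\mathbf{x}_\ell$ is $H_\ell\,\nabla_{\hat{\mathbf{x}}_\ell}E_s^{\text{synapse}}=-H_\ell\,\mathbf{I}^{(s)}_{\mathbf{x}_\ell}$. Summing over layers and synapses gives
\[
\nabla_{\mathbf{x}_\ell}E_{\text{total}}=H_\ell\bigl(\mathbf{x}_\ell-\mathbf{I}_{\mathbf{x}_\ell}\bigr)=-\tau_\ell H_\ell\frac{d\mathbf{x}_\ell}{dt},
\]
where the last equality uses the local update rule $\tau_\ell\dot{\mathbf{x}}_\ell=\mathbf{I}_{\mathbf{x}_\ell}-\mathbf{x}_\ell$.

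\textbf{Step 2: time derivative.} By the chain rule along trajectories,
\[
\frac{dE_{\text{total}}}{dt}=\sum_\ell\Bigl\langle\nabla_{\mathbf{x}_\ell}E_{\text{total}},\frac{d\mathbf{x}_\ell}{dt}\Bigr\rangle=-\sum_\ell\tau_\ell\,\dot{\mathbf{x}}_\ell^\top H_\ell\,\dot{\mathbf{x}}_\ell .
\]
This matches the displayed expression in the statement. Since $\tau_\ell>0$ and each $H_\ell\succeq0$, every summand is nonnegative and the total derivative is $\le0$.

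\textbf{Main obstacle.} The only delicate point is justifying the chain rule when $\hat{\mathbf{x}}_\ell$ is not a bijective function of $\mathbf{x}_\ell$. For the softmax layer, $\partial\hat{\mathbf{x}}/\partial\mathbf{x}$ is singular along $\mathbf{1}$. I will handle this as in Step 1, by never passing through $\hat{\mathbf{x}}\mapsto\mathbf{x}$ and differentiating everything as a function of $\mathbf{x}$. This requires only $C^2$ Lagrangians and $C^1$ synaptic energies. It also explains why the inequality is not strict: motion along $\ker H_\ell$, such as a uniform shift of the retrieval logits, leaves the energy unchanged.
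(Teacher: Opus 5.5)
Your proof is correct and follows the same route as the paper. Both use the chain rule through the activation Jacobian $\partial\hat{\mathbf{x}}_\ell/\partial\mathbf{x}_\ell=\nabla^2\mathcal{L}_\mathbf{x}$, substitute the update rule $\tau_\ell\dot{\mathbf{x}}_\ell=\mathbf{I}_{\mathbf{x}_\ell}-\mathbf{x}_\ell$, and conclude from positive semidefiniteness of the Lagrangian Hessian. Differentiating $E_\ell^{\text{neuron}}$ directly in $\mathbf{x}_\ell$ is a genuine tightening: the paper's Key Property argument implicitly uses $\partial\mathbf{x}/\partial\hat{\mathbf{x}}$, which does not exist for the softmax layer, whereas your version shows the resulting shift ambiguity along $\mathbf{1}$ is harmless because the retrieval-layer Hessian annihilates $\mathbf{1}$.
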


The Hessian $\frac{\partial^2 \mathcal{L}_\mathbf{x}}{\partial \mathbf{x}_\ell \partial \mathbf{x}_\ell}$ is positive semi-definite due to convexity of $\mathcal{L}_\mathbf{x}$, ensuring energy never increases.

\textbf{Convergence:} If energy is bounded below:
\begin{itemize}
\item Strictly positive definite Hessian $\Rightarrow$ converges to fixed point
\item Zero eigenvalues $\Rightarrow$ may converge to fixed manifold with non-zero velocity
\end{itemize}
\subsection{Summary}

To design an Associative Memory:
\begin{enumerate}
\item Choose convex Lagrangian $\mathcal{L}_\mathbf{x}(\mathbf{x})$ for each neuron layer (defines activation function)
\item Design hypersynapse energies $E_s^{\text{synapse}}$ encoding desired relationships
\item Total energy: sum of all neuron and hypersynapse energies
\item Dynamics: minimize energy via local gradient descent (Eq. 2)
\item Guaranteed convergence with bounded activations
\end{enumerate}

\section{Modern Hopfield Networks}\label{sec:appendix_MHN}
To keep this work self-contained, we provide a brief introduction to Modern Hopfield Networks (MHN) (\cite{ramsauer2020hopfield}). MHNs are a form of DAM with an energy function of the form
\begin{equation}
E = -\text{lse}(\beta, X^T\xi) + \frac{1}{2}\xi^T\xi + \beta^{-1}\log N + \frac{1}{2}M^2
\end{equation}
where $\text{lse}(\beta, \cdot)$ is the LogSumExp function with temperature parameter $\beta$, $N$ is the number of stored patterns, $M$ is the norm of the largest pattern (i.e. $M = \max_i \|x_i\|$), and $\xi$ is the query. The update rule for the MHN is then given by
\begin{equation}\label{mhn_update}
\xi^{new} = X\text{softmax}(\beta X^T \xi).
\end{equation}

\cite{ramsauer2020hopfield} provide theorems to guarantee convergence for this form of Associative Memory and further establishes an exponential storage capacity (see Theorems 1, 2, and 3 in their paper). Of particular interest in this paper are Theorems 4 and 5.
\begin{theorem}[Theorem 4 in \cite{ramsauer2020hopfield}]
    With query $\xi$, pattern $x_i$, fixed point $x_i^*$, and separation of $x_i$ to other memories $\Delta_i$, after one update, the distance between $f(\xi)$ and $x_i^*$ is exponentially small. Specifically, 
    \[
        \|f(\xi) - x_i^*\| \leq 2\beta NM^2(N-1)\exp(-\beta(\Delta_i-2\max\{\|\xi - x_i\|, \|x_i^*-x_i\|M))\|\xi - x_i^*\|.
    \]
\end{theorem}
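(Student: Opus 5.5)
The plan is to compare one update $f(\xi)=X\,\mathrm{softmax}(\beta X^\top\xi)$ with the fixed point via $x_i^*=f(x_i^*)$, and then apply the mean value theorem along the segment joining $\xi$ and $x_i^*$. This gives
\[
\|f(\xi)-x_i^*\| = \|f(\xi)-f(x_i^*)\| \le \sup_{\tilde\xi\in[\xi,x_i^*]}\|J(\tilde\xi)\|\,\|\xi-x_i^*\|.
\]
The Jacobian of the update is $J(\tilde\xi)=\beta X\bigl(\mathrm{diag}(p)-pp^\top\bigr)X^\top$, with $p=\mathrm{softmax}(\beta X^\top\tilde\xi)$. So everything reduces to bounding $\|J\|$ uniformly on that segment. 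I will take the separation to be $\Delta_i=\min_{j\neq i}\bigl(\langle x_i,x_i\rangle-\langle x_i,x_j\rangle\bigr)$, and read the exponent in the statement as $\exp\!\bigl(-\beta(\Delta_i-2\max\{\|\xi-x_i\|,\|x_i^*-x_i\|\}M)\bigr)$, with the bracket placed after the max.

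\textbf{Step 1 (factor the norm).} Use submultiplicativity, $\|J\|\le\beta\|X\|^2\,\|\mathrm{diag}(p)-pp^\top\|$. Then bound $\|X\|^2\le\|X\|_F^2\le NM^2$, since every pattern has norm at most $M$.

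\textbf{Step 2 (concentrated softmax).} Show that $\|\mathrm{diag}(p)-pp^\top\|\le 2(1-p_i)$ for any probability vector $p$. I would split the matrix into its $i$-th row and column plus the remaining block and apply Gershgorin, or bound the $\ell_1\to\ell_1$ norm. For each column $k$, the absolute column sum of this matrix is $2p_k(1-p_k)$. For $k\neq i$ this is at most $2p_k$, and these sum to $2(1-p_i)$. For $k=i$ it is $2p_i(1-p_i)\le 2(1-p_i)$. A symmetric-matrix argument then turns these column bounds into the stated bound on the spectral norm. I expect this to be the main technical obstacle, because a naive bound gives only $1/2$ and does not show that the norm vanishes as $p_i\to 1$.

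\textbf{Step 3 (exponential smallness of $1-p_i$).} For any $\tilde\xi$ and any $j\neq i$, write
\[
\langle x_i-x_j,\tilde\xi\rangle=\langle x_i-x_j,x_i\rangle+\langle x_i-x_j,\tilde\xi-x_i\rangle\ge\Delta_i-2M\|\tilde\xi-x_i\|,
\]
using Cauchy–Schwarz and $\|x_i-x_j\|\le 2M$. It follows that $1-p_i=\sum_{j\ne i}p_j\le (N-1)\exp\!\bigl(-\beta(\Delta_i-2M\|\tilde\xi-x_i\|)\bigr)$.

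\textbf{Step 4 (convexity).} The function $\tilde\xi\mapsto\|\tilde\xi-x_i\|$ is convex, so on the segment $[\xi,x_i^*]$ it is at most $\max\{\|\xi-x_i\|,\|x_i^*-x_i\|\}$. Combining Steps 1–4 gives
\[
\|J\|\le 2\beta NM^2(N-1)\exp\!\bigl(-\beta(\Delta_i-2\max\{\|\xi-x_i\|,\|x_i^*-x_i\|\}M)\bigr)
\]
uniformly on the segment, which is the claim.

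Beyond Step 2, the remaining delicate point is implicit: the result is only informative when the exponent is positive. This requires the existence of a fixed point $x_i^*$ close to $x_i$ (Ramsauer et al.'s Theorem 2/3 on a ball around $x_i$). I would simply assume that here, as an earlier result.
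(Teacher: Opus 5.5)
Your proposal is correct. The paper gives no proof of its own for this statement (it is imported from Ramsauer et al.), and your argument is essentially theirs: the mean value inequality on $[\xi,x_i^*]$ using $f(x_i^*)=x_i^*$, then $\|J\|_2\le\beta\|X\|_2^2\|\mathrm{diag}(p)-pp^\top\|_2\le 2\beta NM^2(1-p_i)$ and $1-p_i\le(N-1)e^{-\beta(\Delta_i-2M\|\tilde\xi-x_i\|)}$, with convexity of the norm to pass to the endpoint maximum. Two small remarks: in Step 2 what you need is that each column sum satisfies $2p_k(1-p_k)\le 2p_k\le 2(1-p_i)$ individually, not their total; and the inequality holds for any fixed point $x_i^*$, with closeness to $x_i$ only determining whether it is informative.
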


\begin{theorem}[Theorem 5 in \cite{ramsauer2020hopfield}]
    The retrieval error $\|f(\xi) - x_i\|$ is bounded by
    \[
    \|f(\xi) - x_i\| \leq 2(N-1)\exp(-\beta(\Delta_i-2\max\{\|\xi - x_i\|, \|x_i^*-x_i\|M))M
    \]
    and for $\|x_i - x_i^*\| \leq \frac{1}{2\beta M}$ together with $\|x_i - \xi\| \leq \frac{1}{2\beta M}$ we have
    \[
    \|x_i - x_i^*\| \leq 2e(N-1)M\exp(-\beta\Delta_i).
    \]
\end{theorem}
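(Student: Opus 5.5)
The plan is to work directly from the one-step update map $f(\xi)=X\,\mathrm{softmax}(\beta X^T\xi)$ of \eqref{mhn_update}. The stored patterns are the columns $x_1,\dots,x_N$ of $X$, with $\|x_j\|\le M$. I take the separation to be $\Delta_i=\min_{j\neq i}\left(x_i^Tx_i-x_i^Tx_j\right)$, as in \cite{ramsauer2020hopfield}. I read the exponent in the statement as $-\beta\left(\Delta_i-2M\max\{\|\xi-x_i\|,\|x_i^*-x_i\|\}\right)$, the evident intended placement of $M$.

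\textbf{Step 1: reduce the error to the leaked softmax mass.} Write $p=\mathrm{softmax}(\beta X^T\xi)$. Since $\sum_j p_j=1$,
\[
f(\xi)-x_i=\sum_{j\neq i}p_j\,(x_j-x_i).
\]
Each term satisfies $\|x_j-x_i\|\le 2M$, so
\[
\|f(\xi)-x_i\|\le 2M\sum_{j\neq i}p_j=2M(1-p_i).
\]

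\textbf{Step 2: bound the leaked mass.} For each $j\neq i$ we have $p_j/p_i=\exp\!\big(\beta(x_j-x_i)^T\xi\big)$, and $p_i\le 1$. Hence
\[
1-p_i\le\sum_{j\neq i}\exp\!\big(-\beta(x_i-x_j)^T\xi\big).
\]
To lower-bound the exponent, split $\xi=x_i+(\xi-x_i)$:
\[
(x_i-x_j)^T\xi=\left(x_i^Tx_i-x_j^Tx_i\right)+(x_i-x_j)^T(\xi-x_i).
\]
The first term is at least $\Delta_i$. By Cauchy--Schwarz the second is at least $-2M\|\xi-x_i\|$. Therefore
\[
1-p_i\le (N-1)\exp\!\big(-\beta(\Delta_i-2M\|\xi-x_i\|)\big).
\]
Combining with Step 1 gives the first bound. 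Replacing $\|\xi-x_i\|$ by the larger quantity $\max\{\|\xi-x_i\|,\|x_i^*-x_i\|\}$ only weakens the bound, so the stated form follows.

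\textbf{Step 3: the fixed-point bound.} Apply the Step 2 bound with $\xi=x_i^*$. Since $x_i^*$ is a fixed point, $f(x_i^*)=x_i^*$, which gives the self-referential inequality
\[
\|x_i^*-x_i\|\le 2(N-1)M\exp(-\beta\Delta_i)\exp\!\big(2\beta M\|x_i^*-x_i\|\big).
\]
Under the hypothesis $\|x_i^*-x_i\|\le \frac{1}{2\beta M}$, the last factor is at most $e$. This yields $\|x_i-x_i^*\|\le 2e(N-1)M\exp(-\beta\Delta_i)$. The companion hypothesis $\|x_i-\xi\|\le\frac{1}{2\beta M}$ gives the analogous $e$-factor bound for $\|f(\xi)-x_i\|$ by the same step.

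\textbf{Expected difficulty.} None of the computations is hard. The only delicate point is Step 3: it presupposes a fixed point $x_i^*$ lying in the ball of radius $\frac{1}{2\beta M}$ around $x_i$, so that the exponential self-reference can be absorbed into the constant $e$. I would take this localization as given, as the statement does. If it were needed, I would obtain it as in \cite{ramsauer2020hopfield}: show $f$ maps that ball into itself when $\Delta_i$ is large enough relative to $\ln(N)/\beta$, then apply Brouwer's fixed-point theorem.
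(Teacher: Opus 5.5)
Your proof is correct. The paper gives no proof of this statement; it quotes the result from Ramsauer et al. Your argument is essentially the standard one from their appendix. You write $f(\xi)-x_i=\sum_{j\neq i}p_j(x_j-x_i)$ to bound the error by $2M(1-p_i)$. You lower-bound the perturbed margin by $(x_i-x_j)^T\xi\ge\Delta_i-2M\|\xi-x_i\|$ via Cauchy--Schwarz. You then set $\xi=x_i^*$ and absorb $\exp(2\beta M\|x_i^*-x_i\|)\le e$.

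As you note, the argument actually gives the slightly sharper bound involving $\|\xi-x_i\|$ alone. It also shows that the hypothesis $\|x_i-\xi\|\le\frac{1}{2\beta M}$ is not needed for the fixed-point estimate.
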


We see from these theorems that increasing the separation between memories yields an exponential improvement in both retrieval after one update and retrieval error.

\subsection{MHN's connection to Transformers}
Notably, MHNs bear a strong mathematical resemblance to the attention mechanism in Transformers (\cite{vaswani2017attention}). This connection can be seen below.

Suppose we have $N$ stored patterns $y_i$ and $S$ state query patterns $r_i$ that are mapped to a space of dimension $d_k$. Set $x_i = W_K^Ty_i$ and $\xi_i = W_Q^t r_i$ and then multiply the result of the update rule (Eq. \ref{mhn_update}) by $W_V$ where $W_K \in \mathbb{R}^{d_y\times d_k}$, $W_Q \in \mathbb{R}^{d_r \times d_k}$, $W_V \in \mathbb{R}^{d_k \times d_v}$.

If we combine everything into matrix operations as is commonly done for attention, we arrive at the following. Let $Y = (y_1, \ldots, y_N)^T$, $R = (r_1, \ldots, r_N)^T$. Define $X^T = K = YW_K$, $\Xi^T = Q = RW_Q$, and $V = YW_KW_V = X^TW_V$. Let the temperature parameter in MHN $\beta = \frac{1}{\sqrt{d_k}}$ and let the output of softmax be a row-vector. Then for the update rule in matrix form multiplied by $W_V$ we get
\begin{equation}
    Z = \text{softmax}(\frac{1}{\sqrt{d_k}}QK^T)V = \text{softmax}(\beta RW_QW_K^TY^T)YW_KW_V.
\end{equation}

We can recognize the left part of the equation as being the attention mechanism, while the right side is the update rule for MHNs, followed by a matrix multiplication with $W_V$.

\section{Theoretical Proofs}
\subsection{Proof of Theorem \ref{thm:separation}}\label{app:thm:separation}

\begin{proof}
Let,
\begin{equation}
    \mathcal{S}(\mu, q) := \langle \zeta^\mu, q \rangle + \lambda\, s_\mu
\end{equation}
At the retrieval fixed point, $r_k^* = \mathcal{S}(k, q)$ for each memory $k$. The retrieval probability for the target $\mu$ is the softmax output:
\begin{equation}
    P(\mu) = \hat{r}_\mu = \frac{e^{\beta\, \mathcal{S}(\mu, q)}}{\displaystyle\sum_{k=1}^{M} e^{\beta\, \mathcal{S}(k, q)}}.
\end{equation}
Dividing numerator and denominator by $e^{\beta\, \mathcal{S}(\mu, q)}$:
\begin{equation}
    P(\mu) = \frac{1}{1 + \displaystyle\sum_{\nu \neq \mu} e^{-\beta\, \Delta_\nu}},
\end{equation}
where $\Delta_\nu := \mathcal{S}(\mu, q) - \mathcal{S}(\nu, q) \geq \Delta$ for all $\nu \neq \mu$, by definition of $\Delta$ as the minimum gap. Since the exponential is monotonically decreasing, $e^{-\beta\, \Delta_\nu} \leq e^{-\beta\, \Delta}$ for each $\nu \neq \mu$. Summing over the $M - 1$ distractor terms:
\begin{equation}\label{eq:exact_bound}
    P(\mu) \geq \frac{1}{1 + (M-1)\,e^{-\beta\, \Delta}},
\end{equation}
To derive the threshold condition, we require $P(\mu) \geq 1 - \epsilon$, i.e.:
\begin{equation}
    \frac{1}{1 + (M-1)\,e^{-\beta\, \Delta}} \geq 1 - \epsilon.
\end{equation}
Inverting and rearranging:
\begin{equation}
    1 + (M-1)\,e^{-\beta\, \Delta} \leq \frac{1}{1-\epsilon},
\end{equation}
\begin{equation}
    (M-1)\,e^{-\beta\, \Delta} \leq \frac{1}{1-\epsilon} - 1 = \frac{\epsilon}{1-\epsilon}.
\end{equation}
Taking logarithms:
\begin{equation}
    -\beta\, \Delta \leq \ln\!\left(\frac{\epsilon}{(1-\epsilon)(M-1)}\right),
\end{equation}
which leads to the final result.
\end{proof}

\subsection{Formal version of Theorem \ref{thm:wta} and Proof}\label{app:thm:wta}
\begin{theorem}
Let $\alpha_{\mathrm{crit}}\triangleq (1-\mu_{\min})^{-1}$, and for state $s(t)$ and input $u$, define projections on the eigenvectors as $a_k(t)=v_k^\top s(t)$ and $c_k=v_k^\top u$ respectively. Finally, define $p(t)=\mathrm{softmax}(\beta s(t))$ as the distribution induced by $s(t)$ post softmax.

\textbf{(i) Subcritical ($\alpha<\alpha_{\mathrm{crit}}$).}
Then, $A(\alpha)\succ0$ and the fixed point solution is unique:
\[
s^*=A(\alpha)^{-1}u=\sum_{k=1}^N \frac{c_k}{\eta_k(\alpha)}\,v_k,
\qquad
p^*=\mathrm{softmax}(\beta s^*).
\]
The modes $v_k\in \mathcal{V}_{\min}$ have the maximum possible amplification which is given by $\frac{1}{\eta_{\min}(\alpha)}$. Furthermore, $\eta_{\min}(\alpha)\to0$ as $\alpha\to \alpha_{\mathrm{crit}}$ leading to larger amplifications.

\textbf{(ii) Supercritical ($\alpha>\alpha_{\mathrm{crit}}$).}
Define
$\gamma_{k}(\alpha)\triangleq\alpha(1-\mu_{k})-1$,
and $\xi \triangleq P_{\min}\sum_{\{k : \mu_k=\mu_{\min}\}} w_k\,v_k$, with $w_k\triangleq a_k(0)+c_k/\gamma_k(\alpha)$.
\[
\xi\triangleq P_{\min}\,w,
\qquad
w\triangleq\sum_{\mu_k=\mu_{\min}} w_k\,v_k,
\qquad
w_k\triangleq a_k(0)+\frac{c_k}{\gamma_{\min}(\alpha)}.
\]
Assume $\xi\neq 0$ and $i^*:=\arg\max_{i\in[N]} \xi_i$ is unique. Then
\[
p(t)\to\delta_{i,i^*}\quad\text{as }t\to\infty.
\]
In particular, for any $\delta\in(0,1)$, $p_{i^*}(t^*(\alpha))\ge 1-\delta$ with
\[
t^*(\alpha)=\frac{\tau_s}{\gamma_{\min}(\alpha)}
\Bigl[\ln\!\Bigl(\frac{\kappa}{\beta\,\Delta_{\xi}}\Bigr)+O(1)\Bigr],
\quad
\Delta_{\xi}:=\xi_{i^*}-\max_{j\neq i^*} \xi_j>0,
\]
where $\kappa>0$ is independent of $\alpha$.
\end{theorem}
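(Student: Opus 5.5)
The dynamics here are linear. With $\lambda=0$ and uniform input $u$, equation \eqref{eq:s_dyn} reduces to $\tau_s\dot s=u-A(\alpha)s$, where $A(\alpha)=(1-\alpha)I+\alpha G$. Because $G$ is symmetric with orthonormal eigenbasis $\{v_k\}$, $A(\alpha)$ is diagonal in the same basis, with eigenvalues $\eta_k(\alpha)=1-\alpha+\alpha\mu_k=1-\alpha(1-\mu_k)$. So the plan is to project onto each $v_k$, solve the resulting scalar ODEs, and then push the result through the softmax.

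\textbf{Subcritical case.} Projecting gives $\tau_s\dot a_k=c_k-\eta_k(\alpha)a_k$. Since $\mu_k\ge\mu_{\min}$ and $\alpha\ge0$, the smallest eigenvalue is $\eta_{\min}(\alpha)=1-\alpha(1-\mu_{\min})$, which is positive exactly when $\alpha<\alpha_{\mathrm{crit}}$. Hence $A(\alpha)\succ0$, and the fixed point $s^*=A^{-1}u=\sum_k (c_k/\eta_k)v_k$ is unique. The explicit solution $a_k(t)=c_k/\eta_k+(a_k(0)-c_k/\eta_k)e^{-\eta_k t/\tau_s}$ shows it is also globally attracting. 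The amplification factor $1/\eta_k$ is largest when $\mu_k=\mu_{\min}$, and it diverges as $\alpha\uparrow\alpha_{\mathrm{crit}}$ because $\eta_{\min}$ is affine in $\alpha$ and vanishes at $\alpha_{\mathrm{crit}}$. Continuity of the softmax then gives $p^*$.

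\textbf{Supercritical case.} Now $\gamma_k(\alpha)=-\eta_k(\alpha)$, and the modes with $\gamma_k>0$ grow. The solution becomes $a_k(t)=-c_k/\gamma_k+(a_k(0)+c_k/\gamma_k)e^{\gamma_k t/\tau_s}$, so each mode grows like $w_k e^{\gamma_k t/\tau_s}$. The largest rate, $\gamma_{\min}:=\alpha(1-\mu_{\min})-1>0$, belongs exactly to the modes in $\mathcal V_{\min}$. This gives the decomposition
\[
s(t)=e^{\gamma_{\min}t/\tau_s}\,\xi+R(t),\qquad \|R(t)\|\le C\bigl(1+e^{\gamma_2 t/\tau_s}\bigr),
\]
where $\gamma_2<\gamma_{\min}$ is the next-largest rate (taking $\gamma_2=0$ if no other mode grows) and $C$ depends on $a(0)$, $u$ and the spectrum. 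Here $\xi=\sum_{\mu_k=\mu_{\min}}w_kv_k$; note that $P_{\min}w=w$.

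For the softmax, write $p_j/p_{i^*}=\exp\bigl(-\beta(s_{i^*}-s_j)\bigr)$. The gap satisfies $s_{i^*}-s_j\ge e^{\gamma_{\min}t/\tau_s}\Delta_\xi-2\|R(t)\|_\infty$. Since $\|R\|$ grows strictly slower than $e^{\gamma_{\min}t/\tau_s}$ and $\Delta_\xi>0$, the gap tends to $+\infty$ for every $j\ne i^*$, so $p(t)\to\delta_{i^*}$. For the quantitative time, I require $(N-1)e^{-\beta\cdot\mathrm{gap}}\le\delta/(1-\delta)$, as in the proof of Theorem~\ref{thm:separation}. It suffices that $e^{\gamma_{\min}t/\tau_s}\Delta_\xi\gtrsim 2C+\beta^{-1}\ln((N-1)/\delta)$. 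Taking logarithms gives $t^*=\frac{\tau_s}{\gamma_{\min}}[\ln(\kappa/(\beta\Delta_\xi))+O(1)]$, with $\kappa$ absorbing $\beta C$ and $\ln((N-1)(1-\delta)/\delta)$.

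\textbf{Main obstacle.} The delicate step is making $\kappa$ and the $O(1)$ term independent of $\alpha$. The constant $C$ involves $c_k/\gamma_k$, which blows up as $\alpha\downarrow\alpha_{\mathrm{crit}}$ for modes with $\gamma_k\to0$. The remainder's growth rate $\gamma_2$ also approaches $\gamma_{\min}$ relative to scale. My first attempt would restrict to $\alpha$ in a compact set $[\alpha_0,\alpha_1]$ with $\alpha_0>\alpha_{\mathrm{crit}}$. There the constants $c_k/\gamma_k$ and the ratio $\gamma_2/\gamma_{\min}$ are uniformly bounded away from their bad limits, and I would absorb the residual $\alpha$-dependence into the $O(1)$. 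If the statement is meant near $\alpha_{\mathrm{crit}}$, I would instead state the $O(1)$ as depending on the spectral gap $(\mu_{\min}'-\mu_{\min})$, where $\mu_{\min}'$ is the next distinct eigenvalue. I would also note that the $w_k$ depend on $\alpha$ through $\gamma_{\min}$, which is why $\xi$ and $\Delta_\xi$ are defined with $\gamma_{\min}(\alpha)$.
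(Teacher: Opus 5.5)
Your proposal follows the paper's proof essentially step for step. Both project onto the eigenbasis of $G$, solve the scalar ODEs for $a_k(t)$ explicitly, use dominance of the $\mathcal{V}_{\min}$ modes at rate $\gamma_{\min}(\alpha)$ and the softmax gap bound, and solve the logistic threshold from Theorem~\ref{thm:separation} for $t^*$. Your remainder bound $C(1+e^{\gamma_2 t/\tau_s})$ is more careful than the paper's $|r_i(t)|\le C$, which holds only while the non-minimal modes are still stable, i.e.\ just above $\alpha_{\mathrm{crit}}$. The paper also absorbs the $\alpha$-dependence of $C$ into the $O(1)$ without the justification you correctly flag as the main obstacle.
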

\begin{proof}
Let $N$ unit-norm memories $\{\zeta^\mu\}_{\mu=1}^N \subset \mathbb{R}^d$ with Gram matrix $G\in\mathbb{R}^{N\times N}$, $G_{ij}=\langle \zeta^i,\zeta^j\rangle$, so $G_{ii}=1$. Define $W_{h_2}:=G-I$ and
\begin{equation}\label{eq:app_A_def}
A(\alpha):=I+\alpha W_{h_2}=(1-\alpha)I+\alpha G.
\end{equation}
Let $R:=\mathrm{rank}(G)=\min(N,d)$ and write the spectral decomposition
\begin{equation}\label{eq:app_G_spec}
G=\sum_{k=1}^{R}\mu_k\,v_k v_k^\top,
\qquad
\mu_1\ge\cdots\ge\mu_R>0,
\end{equation}
with orthonormal $\{v_k\}_{k=1}^R$. When $N>R$, extend $\{v_k\}_{k=1}^R$ to an orthonormal basis $\{v_k\}_{k=1}^N$ of $\mathbb{R}^N$ by choosing
$\{v_k\}_{k=R+1}^N$ to span $\ker(G)$, and set $\mu_k:=0$ for $k>R$. Define the smallest eigenvalue
\begin{equation}\label{eq:app_mu_min_def}
\mu_{\min}:=\mu_N
=
\begin{cases}
\mu_N>0 & \text{if } N\le d,\\
0 & \text{if } N>d,
\end{cases}
\end{equation}
and let $\mathcal{V}_{\min}:=\mathrm{span}\{v_k:\mu_k=\mu_{\min}\}$ denote its eigenspace with orthogonal projector
\begin{equation}\label{eq:app_Pmin_def}
P_{\min}:=\sum_{\mu_k=\mu_{\min}} v_k v_k^\top.
\end{equation}
Then
\begin{equation}\label{eq:app_eta_def}
A(\alpha)v_k=\eta_k(\alpha)v_k,
\qquad
\eta_k(\alpha)=1-\alpha(1-\mu_k),
\qquad k=1,\ldots,N,
\end{equation}
so the smallest eigenvalue is
\begin{equation}\label{eq:app_eta_min_def}
\eta_{\min}(\alpha):=\min_k \eta_k(\alpha)=1-\alpha(1-\mu_{\min}),
\end{equation}
attained for all $k$ with $\mu_k=\mu_{\min}$. Define
\begin{equation}\label{eq:app_alpha_crit}
\alpha_{\mathrm{crit}}:=\frac{1}{1-\mu_{\min}}.
\end{equation}
Write the input as $u=\sum_{k=1}^N c_k v_k$ with $c_k:=v_k^\top u$ and recall $p(t)=\mathrm{softmax}(\beta s(t))$.

\textbf{Subcritical case ($\alpha<\alpha_{\mathrm{crit}}$).}
Since $\alpha<\alpha_{\mathrm{crit}}$ iff $\eta_{\min}(\alpha)>0$, we have $\eta_k(\alpha)\ge \eta_{\min}(\alpha)>0$ for all $k$, hence $A(\alpha)\succ 0$ and is invertible. The fixed point is unique and satisfies
\begin{equation}\label{eq:app_fp}
s^*=A(\alpha)^{-1}u
=\sum_{k=1}^N \frac{c_k}{\eta_k(\alpha)}\,v_k,
\qquad
p^*=\mathrm{softmax}(\beta s^*).
\end{equation}
Thus mode $k$ is amplified by $1/\eta_k(\alpha)$, and the modes in $\mathcal{V}_{\min}$ are maximally amplified by
$1/\eta_{\min}(\alpha)$ with $\eta_{\min}(\alpha)\to 0$ as $\alpha\to \alpha_{\mathrm{crit}}$.

\textbf{Supercritical case ($\alpha>\alpha_{\mathrm{crit}}$).}
Consider the gate dynamics
\begin{equation}\label{eq:app_gate_dyn}
\tau_s\,\dot{s}(t)=u-A(\alpha)s(t),
\qquad
p(t)=\mathrm{softmax}(\beta s(t)),
\end{equation}
and define modal coordinates $a_k(t):=v_k^\top s(t)$. Projecting equation~\eqref{eq:app_gate_dyn} onto $v_k$ gives
\begin{equation}\label{eq:app_scalar_ode}
\tau_s\,\dot a_k(t)=c_k-\eta_k(\alpha)a_k(t),
\end{equation}
so
\begin{equation}\label{eq:app_mode_sol}
a_k(t)=\frac{c_k}{\eta_k(\alpha)}+\Bigl(a_k(0)-\frac{c_k}{\eta_k(\alpha)}\Bigr)e^{-\eta_k(\alpha)t/\tau_s}.
\end{equation}
Since $\alpha>\alpha_{\mathrm{crit}}$, $\eta_{\min}(\alpha)<0$ on $\mathcal{V}_{\min}$. Set
\begin{equation}\label{eq:app_gamma_min_def}
\gamma_{\min}(\alpha):=-\eta_{\min}(\alpha)=\alpha(1-\mu_{\min})-1>0.
\end{equation}
For each $k$ with $\mu_k=\mu_{\min}$ (hence $\eta_k(\alpha)=\eta_{\min}(\alpha)$), rewrite equation~\eqref{eq:app_mode_sol} as
\begin{equation}\label{eq:app_unstable_form}
a_k(t)=-\frac{c_k}{\gamma_{\min}(\alpha)}+w_k\,e^{\gamma_{\min}(\alpha)t/\tau_s},
\qquad
w_k:=a_k(0)+\frac{c_k}{\gamma_{\min}(\alpha)}.
\end{equation}
All modes with $\mu_k>\mu_{\min}$ satisfy $\eta_k(\alpha)>\eta_{\min}(\alpha)$ and, in particular, remain stable in a neighborhood of
$\alpha_{\mathrm{crit}}$, contributing bounded terms to $s(t)$.

Expanding $s(t)=\sum_{k=1}^N a_k(t)v_k$ and collecting all stable-mode contributions and constant offsets into $r(t)$ yields, for each coordinate $i$,
\begin{equation}\label{eq:app_dominance}
s_i(t)=\xi_i\,e^{\gamma_{\min}(\alpha)t/\tau_s}+r_i(t),
\quad
\xi_i:=\sum_{\mu_k=\mu_{\min}} w_k\,(v_k)_i=\bigl(P_{\min}w\bigr)_i,
\end{equation}
where $w:=\sum_{\mu_k=\mu_{\min}} w_k v_k\in\mathcal{V}_{\min}$ and $|r_i(t)|\le C$ for all $t$, for a constant $C$ independent of $t$.
Assume $\xi:=(\xi_i)_{i=1}^N\neq 0$ and choose
\begin{equation}\label{eq:app_i_star}
i^*:=\arg\max_{i\in[N]} \xi_i,
\end{equation}
with the maximizer unique. For any $j\neq i^*$, subtracting equation~\eqref{eq:app_dominance} gives
\begin{equation}\label{eq:app_gap_growth}
s_{i^*}(t)-s_j(t)=(\xi_{i^*}-\xi_j)e^{\gamma_{\min}(\alpha)t/\tau_s}+O(1)\xrightarrow[t\to\infty]{}+\infty,
\end{equation}
since $\xi_{i^*}>\xi_j$. Then
\begin{equation}\label{eq:app_softmax_vanish}
p_j(t)=\frac{e^{\beta s_j(t)}}{\sum_{\ell}e^{\beta s_\ell(t)}}
\le \frac{e^{\beta s_j(t)}}{e^{\beta s_{i^*}(t)}}
=e^{-\beta(s_{i^*}(t)-s_j(t))}\xrightarrow[t\to\infty]{}0,
\end{equation}
so $p(t)\to \delta_{i,i^*}$, proving the WTA claim.

Let $j^*\in\arg\max_{j\neq i^*} \xi_j$ be a strongest competitor and define $\Delta_{\xi}:=\xi_{i^*}-\xi_{j^*}>0$.
The softmax identity gives
\begin{equation}\label{eq:app_pi_star_bound}
p_{i^*}(t)
=\frac{1}{1+\sum_{j\neq i^*}e^{-\beta(s_{i^*}(t)-s_j(t))}}
\ge
\frac{1}{1+(N-1)e^{-\beta(s_{i^*}(t)-s_{j^*}(t))}}.
\end{equation}
Thus $p_{i^*}(t)\ge 1-\delta$ is ensured by
\begin{equation}\label{eq:app_threshold_L}
\beta\bigl(s_{i^*}(t)-s_{j^*}(t)\bigr)\ge
\ln\!\Bigl(\frac{(1-\delta)(N-1)}{\delta}\Bigr)=:L_\delta.
\end{equation}
By equation~\eqref{eq:app_dominance}, the leading gap satisfies
\begin{equation}\label{eq:app_gap_leading}
s_{i^*}(t)-s_{j^*}(t)
=
\Delta_{\xi}\,e^{\gamma_{\min}(\alpha)t/\tau_s}+O(1).
\end{equation}
Solving equation~\eqref{eq:app_threshold_L} using equation~\eqref{eq:app_gap_leading} and absorbing additive constants into an $O(1)$ bracket yields, for some
$\kappa>0$ independent of $\alpha$,
\begin{equation}\label{eq:app_tstar_final}
t^*(\alpha)=\frac{\tau_s}{\gamma_{\min}(\alpha)}
\Bigl[\ln\!\Bigl(\frac{\kappa}{\beta\,\Delta_{\xi}}\Bigr)+O(1)\Bigr],
\end{equation}
which is exactly the claimed form.

Finally, when $N\le d$ and $\mu_N$ is simple, $\mathcal{V}_{\min}=\mathrm{span}\{v_N\}$ so $\xi_i=w_N(v_N)_i$ and
$i^*=\arg\max_i w_N(v_N)_i$. When $N>d$, $\mu_{\min}=0$ and
$\gamma_{\min}(\alpha)=\alpha-1$, while $P_{\min}$ is the projector onto $\ker(G)$, so the winner is determined by the projection of
$w$ onto $\ker(G)$.
\end{proof}

\subsection{Criticality with $\lambda > 0$}\label{app:thm:lambda}
\begin{theorem}[Critical Regime with $\lambda > 0$]\label{thm:lambda_crit}
Let $\lambda > 0$ and $\tau_s \ll \tau_r$ so that the gates settle to $s^* = A(\alpha)^{-1}(u + \lambda\hat{r})$ on the fast timescale. Further assume that the context input is uniform (i.e. $u = u_0 \mathbf{1}$), the query provides no bias (i.e. $\langle \zeta^\mu, \hat{q}\rangle = q_0$ for all $\mu \in [N]$), and $G\mathbf{1} = \mu_G \mathbf{1}$ for some $\mu_G > 0$. 

Let $\mu_{\min}^{\perp}$ denote the smallest eigenvalue of $G$ restricted to the subspace $\mathbf{1}^{\perp} \triangleq \{v \in \mathbb{R}^N : \langle v, \mathbf{1}\rangle = 0\}$, and define $\alpha_{\mathrm{crit}} \triangleq (1 - \mu_{\min}^{\perp})^{-1}$.
Then, provided $\lambda^2\beta < N$, the \emph{uniform retrieval state} $\hat{r} = \frac{1}{N}\mathbf{1}$ is a fixed point of the slow retrieval dynamics whose stability is governed by the threshold
\[
\;\alpha_{\mathrm{crit}}^{\lambda} \;=\; \alpha_{\mathrm{crit}}\!\left(1 - \frac{\lambda^2\beta}{N}\right).
\]
The uniform fixed point is stable for $\alpha < \alpha_{\mathrm{crit}}^{\lambda}$ and unstable for $\alpha > \alpha_{\mathrm{crit}}^{\lambda}$.

In particular, for $\alpha_{\mathrm{crit}}^{\lambda} < \alpha < \alpha_{\mathrm{crit}}$, the gate subsystem alone is \emph{subcritical} (its uniform state is stable), and it is the retrieval feedback loop through the softmax nonlinearity that destabilizes the uniform state, inducing winner-take-all retrieval.
\end{theorem}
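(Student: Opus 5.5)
The plan is to work entirely on the slow manifold. Because $\tau_s \ll \tau_r$, I substitute the quasi-static gate response $s^* = A(\alpha)^{-1}(u+\lambda\hat r)$ into Eq.~\eqref{eq:r_dyn}. This gives the reduced slow system
\[
\tau_r\,\dot r = F(r) := b + \lambda A(\alpha)^{-1}u + \lambda^2 A(\alpha)^{-1}\mathrm{softmax}_\beta(r) - r,
\]
with $b = q_0\mathbf{1}$. This is exactly the vector field whose zeros are characterized in Theorem~\ref{thm:self_consistent}. The proof then has three steps: check that the uniform state is an equilibrium, linearize $F$ there, and read off the spectrum using the shared eigenbasis of $G$ and $A(\alpha)$.

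\textbf{Step 1: existence.} Since $G\mathbf{1} = \mu_G\mathbf{1}$, we have $A(\alpha)\mathbf{1} = (1-\alpha+\alpha\mu_G)\mathbf{1}$. Hence, whenever $A(\alpha)$ is invertible, $A(\alpha)^{-1}u$ and $A(\alpha)^{-1}\tfrac1N\mathbf{1}$ are both multiples of $\mathbf{1}$. Take $r^0 = \rho\mathbf{1}$ with $\rho = q_0 + \bigl(\lambda u_0 + \lambda^2/N\bigr)/(1-\alpha+\alpha\mu_G)$. Then $\mathrm{softmax}_\beta(r^0) = \tfrac1N\mathbf{1}$ and $F(r^0) = 0$, so $\hat r = \tfrac1N\mathbf{1}$ is a fixed point.

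\textbf{Step 2: linearization.} At the uniform point the softmax Jacobian is $\tfrac{\beta}{N}\bigl(I - \tfrac1N\mathbf{1}\mathbf{1}^\top\bigr) = \tfrac{\beta}{N}\Pi$, where $\Pi$ is the orthogonal projector onto $\mathbf{1}^\perp$. The linearized slow dynamics are therefore
\[
\tau_r\,\delta\dot r = \Bigl(\tfrac{\lambda^2\beta}{N}A(\alpha)^{-1}\Pi - I\Bigr)\delta r .
\]
$G$ is symmetric and has $\mathbf{1}$ as an eigenvector, so $\mathbf{1}^\perp$ is invariant under $G$ and under $A(\alpha)$. I therefore choose an orthonormal eigenbasis of $G$ consisting of $\mathbf{1}/\sqrt N$ together with eigenvectors $v_k^\perp \in \mathbf{1}^\perp$, with eigenvalues $\mu_k^\perp$.

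In this basis the linear operator is diagonal. Along $\mathbf{1}$ the eigenvalue is $-1$: a uniform shift of logits does not change the softmax, so this direction is always stable. Along each $v_k^\perp$ the eigenvalue is
\[
\frac{\lambda^2\beta}{N\,\eta_k^\perp(\alpha)} - 1, \qquad \eta_k^\perp(\alpha) = 1 - \alpha\bigl(1-\mu_k^\perp\bigr).
\]

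\textbf{Step 3: the threshold.} For $\alpha < \alpha_{\mathrm{crit}}$ all $\eta_k^\perp > 0$. Stability of the uniform state is then equivalent to $\eta_k^\perp(\alpha) > \lambda^2\beta/N$ for every $k$. The binding constraint is the smallest value, $\eta^\perp_{\min} = 1 - \alpha\bigl(1-\mu^\perp_{\min}\bigr)$. Rearranging $\eta^\perp_{\min} > \lambda^2\beta/N$ gives
\[
\alpha < \frac{1 - \lambda^2\beta/N}{1-\mu^\perp_{\min}} = \alpha_{\mathrm{crit}}\Bigl(1 - \frac{\lambda^2\beta}{N}\Bigr) = \alpha^\lambda_{\mathrm{crit}} .
\]
The hypothesis $\lambda^2\beta < N$ makes this threshold positive. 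For $\alpha^\lambda_{\mathrm{crit}} < \alpha < \alpha_{\mathrm{crit}}$, the $v^\perp_{\min}$ eigenvalue is strictly positive, so the uniform state is a hyperbolic saddle and is unstable by the standard linearization (Hartman--Grobman) argument. Throughout this interval $A(\alpha) \succ 0$, so the instability is attributable purely to the $\lambda^2$ feedback term, which is the ``in particular'' claim.

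\textbf{Expected obstacle.} The main difficulty is rigor, not algebra. The statement implicitly requires $\mu^\perp_{\min} < 1$ so that $\alpha_{\mathrm{crit}}$ is finite. This holds unless $G$ is a multiple of the identity on $\mathbf{1}^\perp$; it should follow from $\mathrm{tr}\,G = N$ when $\mu_G > 1$, and otherwise I would add it as an assumption. More seriously, for $\alpha > \alpha_{\mathrm{crit}}$ some $\eta_k^\perp$ become negative, and the naive formula would spuriously report stability. In that range, however, the fast gate subsystem is itself unstable by Theorem~\ref{thm:wta}(ii), and the slow-manifold reduction is no longer valid. I would therefore state the instability claim on the range $\alpha^\lambda_{\mathrm{crit}} < \alpha < \alpha_{\mathrm{crit}}$ and treat $\alpha \ge \alpha_{\mathrm{crit}}$ as covered by the fast-gate instability. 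I would also justify the reduction itself, for example by a Tikhonov/Fenichel singular-perturbation argument that uses $A(\alpha) \succ 0$ for normal hyperbolicity of the slow manifold.
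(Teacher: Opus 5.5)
Your Steps 1--3 are the paper's proof: quasi-static substitution, the uniform equilibrium $r^0=\rho\mathbf{1}$, the softmax Jacobian $\frac{\beta}{N}P_\perp$, and diagonalization of $\frac{\lambda^2\beta}{N}A(\alpha)^{-1}P_\perp-I$ in the common eigenbasis. Your caveat about $\alpha\ge\alpha_{\mathrm{crit}}$ is correct. The paper's step $\frac{\lambda^2\beta}{N\eta^\perp_{\min}}-1>0\iff\eta^\perp_{\min}<\frac{\lambda^2\beta}{N}$ silently assumes $\eta^\perp_{\min}>0$.

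The genuine gap is your assertion that $A(\alpha)\succ0$ on $(\alpha^\lambda_{\mathrm{crit}},\alpha_{\mathrm{crit}})$. You rely on it for the ``in particular'' clause and for normal hyperbolicity of the slow manifold. It also requires $\eta_G(\alpha)=1-\alpha(1-\mu_G)>0$, and the hypotheses do not give this, because $\alpha_{\mathrm{crit}}$ is built from $\mu^\perp_{\min}$ alone. The paper makes the same slip when it says all $\eta_k>0$ for $\alpha<\alpha_{\mathrm{crit}}$. Here is a concrete failure. Take four unit vectors with $\langle\zeta^1,\zeta^2\rangle=\langle\zeta^3,\zeta^4\rangle=-0.6$ and all other inner products $-0.1$. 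Then $G\mathbf{1}=0.2\,\mathbf{1}$ and the spectrum on $\mathbf{1}^\perp$ is $\{0.6,1.6,1.6\}$, so $G\succ0$ and the vectors exist in $\mathbb{R}^4$. This gives $\alpha_{\mathrm{crit}}=2.5$, but $\eta_G(\alpha)<0$ for $\alpha>1.25$. With $\lambda^2\beta/N=0.1$, the isolated gate system is unstable along $\mathbf{1}$ on the whole window $(2.25,2.5)$, so the ``in particular'' clause is false there. The quasi-static premise already fails for every $\alpha>1.25$.

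To close this you must add a hypothesis such as $\mu_G\ge\mu^\perp_{\min}$, i.e.\ the uniform mode is not the softest mode of $G$. Then $\eta_G\ge\eta^\perp_{\min}>0$ below $\alpha_{\mathrm{crit}}$. Together with $G\neq I$, it also forces $\mu^\perp_{\min}<1$ by your trace argument, since all eigenvalues $\ge1$ with trace $N$ would force $G=I$. A convenient sufficient condition is $\mu_G>1$, i.e.\ positive average pairwise correlation, which is the clustered regime of the paper's simulations.

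Your remark that $\mu^\perp_{\min}<1$ unless $G$ is a multiple of the identity on $\mathbf{1}^\perp$ is not right when $\mu_G<1$. In that case the $\mathbf{1}^\perp$-spectrum has mean $(N-\mu_G)/(N-1)>1$ and can lie entirely above $1$ without being constant.

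Finally, you can dispense with Tikhonov/Fenichel by linearizing the full $(s,r)$ system. Since $P_\perp$ commutes with $A(\alpha)$, on each $v_k\in\mathbf{1}^\perp$ the growth rates solve $(\eta_k+\tau_s\sigma)(1+\tau_r\sigma)=\lambda^2\beta/N$. This quadratic has real roots whose product is $(\eta_k-\lambda^2\beta/N)/(\tau_s\tau_r)$. So the mode has a positive rate iff $\eta_k<\lambda^2\beta/N$, regardless of the sign of $\eta_k$ or the ratio $\tau_s/\tau_r$. The $\mathbf{1}$ mode contributes the rates $-1/\tau_r$ and $-\eta_G(\alpha)/\tau_s$. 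This yields stability iff $\eta_G>0$ and $\eta^\perp_{\min}>\lambda^2\beta/N$, which covers $\alpha\ge\alpha_{\mathrm{crit}}$ directly and makes the role of the extra hypothesis explicit.
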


\begin{proof}
Recall $A(\alpha) = (1-\alpha)I + \alpha G$. By assumption, $\mathbf{1}$ is an eigenvector of $G$ with eigenvalue $\mu_{G}$, hence of $A(\alpha)$ with eigenvalue
\begin{equation}\label{eq:eta_G}
\eta_{\scriptscriptstyle G}(\alpha) = 1 - \alpha(1 - \mu_{\scriptscriptstyle G}).
\end{equation}
Observe that $G$ is symmetric, so all eigenvectors other than $\mathbf{1}$ lie in $\mathbf{1}^\perp$. Thus, for $\alpha < \alpha_{\mathrm{crit}}$ all eigenvalues $\eta_k(\alpha) = 1-\alpha(1-\mu_k)$ are positive (since $\mu_k \geq \mu_{\min}^{\perp}$), so $A(\alpha)$ is invertible with $A(\alpha)^{-1}\mathbf{1} = \eta_{\scriptscriptstyle G}^{-1}\,\mathbf{1}$.

At the uniform retrieval state $\hat{r} = \frac{1}{N}\mathbf{1}$, the gate steady-state is
\[
s^* = A(\alpha)^{-1}\!\bigl(u_0\mathbf{1} + \tfrac{\lambda}{N}\mathbf{1}\bigr) = \frac{u_0 + \lambda/N}{\eta_{\scriptscriptstyle G}(\alpha)}\,\mathbf{1}.
\]
Substituting into the retrieval dynamics equation~\eqref{eq:r_dyn} and absorbing the constant query bias $q_0\mathbf{1}$ into a scalar, the effective slow dynamics can be written as the vector field
\begin{equation}\label{eq:retrieval_vf}
\tau_r \dot r \;=\; \underbrace{\lambda\,A(\alpha)^{-1}u}_{\text{constant bias}}
\;+\;\lambda^2\,A(\alpha)^{-1}\,\hat r
\;+\;q_0\,\mathbf{1}
\;-\;r,
\end{equation}
where $\hat r=\mathrm{softmax}(\beta r)$. Observe that $r^* = c\mathbf{1}$ where 
\[
c = \frac{\lambda u_0}{\eta_G(\alpha)} + \frac{\lambda^2}{N\eta_G(\alpha)}+q_0,
\]
is indeed a fixed point of the dynamics.

The stability of this system is determined by the Jacobian of the vector field $F(r) = \lambda^2 A(\alpha)\hat{r} - r$ evaluated at the uniform fixed point. 
The Jacobian of the softmax at $\hat{r} = \frac{1}{N}\mathbf{1}$ is
\[
\mathcal{J} \;\triangleq\; \frac{\partial\,\hat{r}}{\partial\, r}\bigg|_{r=c\mathbf{1}} = \beta\!\left(\mathrm{diag}(\hat{r}) - \hat{r}\hat{r}^{\!\top}\right) = \frac{\beta}{N}\!\left(I - \frac{1}{N}\mathbf{1}\mathbf{1}^{\!\top}\right) \;=\; \frac{\beta}{N}\,P_{\perp},
\]
where $P_{\perp} = I - \frac{1}{N}\mathbf{1}\mathbf{1}^{\!\top}$ is the orthogonal projector onto $\mathbf{1}^{\perp}$. The Jacobian of the vector field equation~\eqref{eq:retrieval_vf} at $r=c\mathbf{1}$ is therefore
\begin{equation}\label{eq:J_sys}
J_{\mathrm{sys}} \;=\; \frac{\lambda^2\beta}{N}\,A(\alpha)^{-1}\,P_{\perp} \;-\; I.
\end{equation}

Let $\{v_k\}_{k=1}^{N}$ be an orthonormal eigenbasis of $G$ with $Gv_k = \mu_k v_k$. By assumption, one eigenvector is $v_1 = \mathbf{1}/\sqrt{N}$ with eigenvalue $\mu_{\scriptscriptstyle G}$, and the remaining $\{v_k\}_{k=2}^{N}$ span $\mathbf{1}^{\perp}$.

For any $v_k \in \mathbf{1}^{\perp}$:
\begin{align}
    P_{\perp}\, v_k &= v_k, \label{eq:proj_vk}\\
    A(\alpha)\, v_k &= \eta_k(\alpha)\, v_k, \quad \text{where } \eta_k(\alpha) = 1 - \alpha(1-\mu_k). \label{eq:A_vk}
\end{align}
By multiplying both sides of equation~\eqref{eq:J_sys} by $v_k$ we get:
\begin{equation}\label{eq:JF_eig}
J_{\mathrm{sys}}\, v_k \;=\; \left(\frac{\lambda^2\beta}{N\,\eta_k(\alpha)} - 1\right)v_k, \qquad k = 2,\ldots,N.
\end{equation}

The uniform fixed point is unstable when $J_{\mathrm{sys}}$ has a positive eigenvalue. From equation~\eqref{eq:JF_eig}, the eigenvalue is maximized for the mode with the \emph{smallest} $\eta_k(\alpha)$, which corresponds to $\mu_{\min}^{\perp}$:
\begin{equation}\label{eq:instability_cond}
\frac{\lambda^2\beta}{N\,\eta_{\min}^{\perp}(\alpha)} - 1 > 0
\quad\Longleftrightarrow\quad
\eta_{\min}^{\perp}(\alpha) < \frac{\lambda^2\beta}{N}.
\end{equation}
Substituting $\eta_{\min}^{\perp}(\alpha) = 1 - \alpha(1 - \mu_{\min}^{\perp})$ and solving for $\alpha$:
\[
1 - \alpha(1-\mu_{\min}^{\perp}) < \frac{\lambda^2\beta}{N} \quad\Longrightarrow\quad \alpha > \frac{1 - \lambda^2\beta/N}{1-\mu_{\min}^{\perp}} = \alpha_{\mathrm{crit}}\!\left(1 - \frac{\lambda^2\beta}{N}\right).
\]
The requirement $\lambda^2\beta < N$ ensures $\alpha_{\mathrm{crit}}^{\lambda} > 0$: when $\lambda^2\beta \geq N$, the feedback gain is strong enough that the uniform state is unstable for \emph{all} $\alpha > 0$.
\end{proof}
\subsection{Formal Statement of Theorem~\ref{thm:self_consistent} and Proof}\label{app:self_consistent}
\begin{theorem}\label{thm:appendix_self_consistent}
Assume $A(\alpha)>0$, $\eta_{\min}(\alpha):=\lambda_{\min}(A(\alpha)).$ For any $\lambda>0$, every fixed point $(s^*,r^*)$ of the coupled gate-retrieval subsystem satisfies,
\begin{equation}
    s^*=A(\alpha)^{-1}(u+\lambda p^*),\quad r^*=b+ \lambda A(\alpha)^{-1} u + \lambda ^2 A(\alpha)^{-1} p^*,
    \label{eq:app_comb_fix_point}
\end{equation}
where $p^*=\mathrm{softmax}_\beta(r^*)$. Equivalently, $p^*$ is a fixed point of the self-consistent map,
\begin{equation}\label{eq:self-consistent-map-app}
    p^*=\Phi_{\alpha,\lambda}(p^*):= \mathrm{softmax}_\beta \left(b+ \lambda A(\alpha)^{-1} u+\lambda^2 A(\alpha)^{-1} p^* \right)
\end{equation}
Conversely, any fixed point $p^*=\Phi_{\alpha,\lambda}(p^*)$ induces a fixed point $(s^*,r^*)$ through \eqref{eq:app_comb_fix_point}. Furthermore, if 
\begin{equation}\label{eq:contraction-condition}
    \frac{\beta\lambda^2}{2\eta_{\min}(\alpha)}<1,
\end{equation} then $\Phi_{\alpha,\lambda}$ is a contraction map on the probability simplex over $\mathbb{R}^N$,
$$
\Delta_N := \{p \in \mathbb{R}^n \mid p_i\geq 0, \sum_i p_i = 1\},
$$
under the Euclidean norm. Consequently, the subsystem has a unique fixed point.
\end{theorem}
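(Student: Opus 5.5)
Setting the time derivatives in \eqref{eq:s_dyn} and \eqref{eq:r_dyn} to zero, with $\hat c=c$ fixed (so $u=W_{h_1}c$) and the query fixed (so $b_\mu=\langle\zeta^\mu,q\rangle$), gives two equations. The gate equation reads $u-\alpha W_{h_2}s^*+\lambda p^*-s^*=0$, where $p^*=\hat r^*=\mathrm{softmax}_\beta(r^*)$. Equivalently $A(\alpha)s^*=u+\lambda p^*$. Since $A(\alpha)\succ0$ it is invertible, so $s^*=A(\alpha)^{-1}(u+\lambda p^*)$.

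The retrieval equation reads $r^*=b+\lambda s^*$. Substituting the expression for $s^*$ gives
\[
r^*=b+\lambda A(\alpha)^{-1}u+\lambda^2A(\alpha)^{-1}p^*,
\]
which is \eqref{eq:app_comb_fix_point}. Applying $\mathrm{softmax}_\beta$ to both sides yields $p^*=\Phi_{\alpha,\lambda}(p^*)$. For the converse, take a fixed point $p^*$ of $\Phi_{\alpha,\lambda}$ and define $s^*,r^*$ by \eqref{eq:app_comb_fix_point}. Then $\mathrm{softmax}_\beta(r^*)=\Phi_{\alpha,\lambda}(p^*)=p^*$, so the equilibrium equations hold with $\hat r^*=p^*$. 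This gives a bijection between coupled fixed points and fixed points of $\Phi_{\alpha,\lambda}$, and uniqueness of one transfers to the other.

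For the contraction, take $p,p'\in\Delta_N$. Write $z(p)=b+\lambda A^{-1}u+\lambda^2A^{-1}p$, which is affine in $p$. Then $z(p)-z(p')=\lambda^2A^{-1}(p-p')$, and since $A\succ0$,
\[
\|z(p)-z(p')\|_2\le \frac{\lambda^2}{\eta_{\min}(\alpha)}\,\|p-p'\|_2 .
\]
Next we need a Lipschitz constant for $\mathrm{softmax}_\beta$ in the Euclidean norm. Its Jacobian at $z$ is $\beta(\mathrm{diag}(\sigma)-\sigma\sigma^\top)$ with $\sigma=\mathrm{softmax}_\beta(z)$. This matrix is symmetric and PSD, so its spectral norm equals its largest eigenvalue, and we must show this eigenvalue is at most $\beta/2$.

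By Gershgorin, each eigenvalue is bounded by $\max_i\bigl(\sigma_i(1-\sigma_i)+\sigma_i\sum_{j\ne i}\sigma_j\bigr)=\max_i 2\sigma_i(1-\sigma_i)\le 1/2$. This gives exactly the constant $\beta/2$. Alternatively, $v^\top(\mathrm{diag}\sigma-\sigma\sigma^\top)v$ is the variance of $v$ under $\sigma$, which is at most $\tfrac14(\max v-\min v)^2\le\tfrac12\|v\|^2$; this gives the same bound.

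By the mean value inequality along the segment between $z(p)$ and $z(p')$, $\mathrm{softmax}_\beta$ is $\beta/2$-Lipschitz. Composing the two bounds,
\[
\|\Phi(p)-\Phi(p')\|_2\le \frac{\beta\lambda^2}{2\eta_{\min}(\alpha)}\,\|p-p'\|_2 .
\]
Under \eqref{eq:contraction-condition} this constant is less than $1$. The image of $\Phi$ lies in $\Delta_N$, and $\Delta_N$ is closed in $\mathbb{R}^N$, hence complete. Banach's fixed point theorem then gives a unique $p^*$, and by the bijection above, a unique fixed point $(s^*,r^*)$ of the coupled subsystem.

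The main obstacle is securing the sharp softmax Lipschitz constant $\beta/2$ rather than the cruder $\beta$. I would rely on the Gershgorin bound, and fall back on the variance bound if needed.
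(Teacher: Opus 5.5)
Your proposal is correct and follows the paper's proof essentially step for step. You solve the two equilibrium equations using invertibility of $A(\alpha)$, bound $\|\lambda^2 A(\alpha)^{-1}\|_2=\lambda^2/\eta_{\min}(\alpha)$, bound the softmax Jacobian by $\beta/2$ and integrate along a segment, then apply Banach on the closed set $\Delta_N$. The only difference is that you lead with a Gershgorin estimate, $2\sigma_i(1-\sigma_i)\le 1/2$, for $\|\mathrm{diag}(\sigma)-\sigma\sigma^\top\|_2$, whereas the paper uses your fallback (the variance identity plus Popoviciu's inequality); both give the same sharp constant.
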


We will use the following lemma in the proof of Theorem~\ref{thm:appendix_self_consistent}.
\begin{lemma}
\label{lem:softmax-jacobian-bound}
For every $p\in\Delta_N$,
\begin{equation}
    \left\|\mathrm{diag}(p)-pp^\top\right\|_2\le \frac12.
    \label{eq:covariance-bound}
\end{equation}
Consequently, for $\mathrm{softmax}_\beta$,
\begin{equation}
    \left\|D(p)\right\|_2
    =\left\|\beta\bigl(\mathrm{diag}(p)-pp^\top\bigr)\right\|_2
    \le \frac{\beta}{2}.
    \label{eq:softmax-jacobian-beta-over-two}
\end{equation}
\end{lemma}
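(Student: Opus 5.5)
The statement to prove is Lemma~\ref{lem:softmax-jacobian-bound}. Set $C(p):=\mathrm{diag}(p)-pp^\top$ and read it as the covariance matrix of a random variable. Let $X$ take the value $e_i$ with probability $p_i$. Then $\mathbb{E}[X]=p$ and $\mathbb{E}[XX^\top]=\mathrm{diag}(p)$, so $C(p)=\mathrm{Cov}(X)$. Two consequences follow immediately. First, $C(p)$ is symmetric positive semidefinite. Second, its spectral norm equals $\max_{\|x\|_2=1} x^\top C(p)x$. So it suffices to bound the quadratic form uniformly over unit vectors $x$.

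For a fixed unit vector $x$, the quadratic form is a variance:
\[
x^\top C(p)x=\sum_i p_i x_i^2-\Big(\sum_i p_i x_i\Big)^2=\mathrm{Var}_p(Y),
\]
where $Y$ takes the value $x_i$ with probability $p_i$. I will bound this variance by the spread of the values via Popoviciu's inequality: a random variable supported in $[m,M]$ has variance at most $(M-m)^2/4$. To stay self-contained, I will instead use the elementary route $\mathrm{Var}(Y)\le \mathbb{E}[(Y-c)^2]$ with $c=(m+M)/2$, together with $|Y-c|\le (M-m)/2$.

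Here $m=\min_i x_i$ and $M=\max_i x_i$. If these extremes are attained at $x_i$ and $x_j$, then $(M-m)^2=(x_i-x_j)^2\le 2(x_i^2+x_j^2)\le 2\|x\|_2^2=2$. Hence $x^\top C(p)x\le 2/4=\tfrac12$. Taking the maximum over unit $x$ gives $\|C(p)\|_2\le \tfrac12$. The bound is tight: take $p=(\tfrac12,\tfrac12,0,\dots)$ and $x=(e_1-e_2)/\sqrt2$.

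For the second claim, recall that the Jacobian of $r\mapsto\mathrm{softmax}(\beta r)$ at a point with output $p$ is $\beta(\mathrm{diag}(p)-pp^\top)$. This follows from $\partial p_i/\partial r_j=\beta p_i(\delta_{ij}-p_j)$. Scaling the first bound by $\beta$ then gives $\|D(p)\|_2\le\beta/2$.

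The only step needing care is the variance bound. Bounding the spectral norm more crudely, for instance via the Gershgorin or trace estimate $\sum_i p_i(1-p_i)\le 1$, loses the factor $\tfrac12$. The point of the argument is to combine Popoviciu's range bound with $(x_i-x_j)^2\le 2\|x\|^2$.
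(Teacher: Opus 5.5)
Your proposal is correct and follows the paper's proof essentially step for step. Both arguments use the variance identity $x^\top C(p)x=\mathrm{Var}_{i\sim p}(x_i)$, then Popoviciu's range bound, then $(\max_i x_i-\min_i x_i)^2\le 2\|x\|_2^2$, and finally scale by $\beta$. Your self-contained derivation of Popoviciu via $\mathbb{E}[(Y-c)^2]$ at the midpoint $c$ and your tightness example at $p=(\tfrac12,\tfrac12,0,\dots,0)$ are welcome additions but do not change the argument.
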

\begin{proof}
    Let
\[
    C(p):=\mathrm{diag}(p)-pp^\top.
\]
First, $C(p)$ is symmetric positive semidefinite because, for any
$x\in\mathbb{R}^N$,
\begin{align}
    x^\top C(p)x
    &=\sum_{i=1}^N p_i x_i^2
      -\left(\sum_{i=1}^N p_i x_i\right)^2
      \notag\\
    &=\mathrm{Var}_{i\sim p}(x_i)
    \ge 0.
    \label{eq:rayleigh-variance-identity}
\end{align}
Therefore,
\[
    \|C(p)\|_2
    =\sup_{\|x\|_2=1} x^\top C(p)x.
\]
Fix any $x\in\mathbb{R}^N$ with $\|x\|_2=1$. By the variance identity
\eqref{eq:rayleigh-variance-identity},
\[
    x^\top C(p)x=\mathrm{Var}_{i\sim p}(x_i).
\]
Popoviciu's variance inequality gives
\[
    \mathrm{Var}_{i\sim p}(x_i)
    \le
    \frac14\left(\max_i x_i-\min_i x_i\right)^2.
\]
Let $a:=\max_i x_i$ and $b:=\min_i x_i$. Then
\[
    (a-b)^2\le 2a^2+2b^2=2(a^2+b^2).
\]
Moreover, since $a$ and $b$ are two coordinates of $x$,
\[
    a^2+b^2\le \|x\|_2^2=1,
\]
where the inequality is also valid when the maximum and minimum occur at the
same coordinate. Hence
\[
    \left(\max_i x_i-\min_i x_i\right)^2\le 2.
\]
Thus
\[
    x^\top C(p)x
    =\mathrm{Var}_{i\sim p}(x_i)
    \le \frac12.
\]
Since $x$ was arbitrary, this proves equation~\eqref{eq:covariance-bound}. Multiplying by $\beta$ proves
equation~\eqref{eq:softmax-jacobian-beta-over-two}.
\end{proof}
We now begin the proof of Theorem~\ref{thm:appendix_self_consistent}
\begin{proof}[Proof of Theorem~\ref{thm:appendix_self_consistent}]
    At a fixed point of equation~\eqref{eq:s_dyn},
\[
    0=u-A(\alpha)s^*+\lambda p^*.
\]
Since $A(\alpha)\succ0$, it is invertible, so
\[
    s^*=A(\alpha)^{-1}(u+\lambda p^*).
\]
At a fixed point of equation~\eqref{eq:r_dyn},
\[
    0=b+\lambda s^*-r^*,
\]
and therefore
\[
    r^*=b+\lambda s^*.
\]
Substituting the expression for $s^*$ gives
\[
    r^*=b+\lambda A(\alpha)^{-1}u+
    \lambda^2A(\alpha)^{-1}p^*.
\]
Finally, by definition, $p^*=\mathrm{softmax}_\beta(r^*)$, hence
\[
    p^*
    =\mathrm{softmax}_\beta\bigl(b+\lambda A(\alpha)^{-1}u+
    \lambda^2A(\alpha)^{-1}p^*\bigr),
\]
which is exactly equation~\eqref{eq:self-consistent-map-app}. Conversely, if
$p^*=\Phi_{\alpha,\lambda}(p^*)$, define $s^*$ and $r^*$ by equation~\eqref{eq:app_comb_fix_point}. Direct substitution verifies both fixed-point
equations.

Now we show that $\Phi_{\alpha, \lambda}$ is a contraction mapping. By Lemma~\ref{lem:softmax-jacobian-bound}, the Jacobian of $\mathrm{softmax}_\beta$ obeys
\[
    \|D(\rho)\|_2\le \frac{\beta}{2}
    \qquad \text{for all }\rho\in\Delta_N.
\]
Now fix arbitrary $p,p'\in\Delta_N$. Because $\Delta_N$ is convex, the line
segment
\[
    p_t:=p'+t(p-p'),\qquad t\in[0,1],
\]
lies in $\Delta_N$. Define
\[
    z_t:=h_{\alpha,\lambda}+B_{\alpha,\lambda}p_t,
    \qquad
    \rho_t:=\mathrm{softmax}_\beta(z_t).
\]
By the Fundamental Theorem of Calculus and the chain rule,
\begin{align}
    \Phi_{\alpha,\lambda}(p)-\Phi_{\alpha,\lambda}(p')
    &=\int_0^1
    D(\rho_t)B_{\alpha,\lambda}(p-p')\,dt.
    \label{eq:ftc-contraction}
\end{align}
Taking Euclidean norms and using submultiplicativity,
\begin{align}
    \|\Phi_{\alpha,\lambda}(p)-\Phi_{\alpha,\lambda}(p')\|_2
    &\le
    \int_0^1
    \|D(\rho_t)\|_2\,
    \|B_{\alpha,\lambda}\|_2\,
    \|p-p'\|_2\,dt
    \notag\\
    &\le
    \frac{\beta}{2}\,
    \|B_{\alpha,\lambda}\|_2\,
    \|p-p'\|_2.
    \label{eq:lipschitz-before-B}
\end{align}
Since
\[
    B_{\alpha,\lambda}=\lambda^2A(\alpha)^{-1},
\]
and $A(\alpha)$ is symmetric positive definite, its inverse is symmetric
positive definite and
\[
    \|A(\alpha)^{-1}\|_2
    =\lambda_{\max}(A(\alpha)^{-1})
    =\frac{1}{\lambda_{\min}(A(\alpha))}
    =\frac{1}{\eta_{\min}(\alpha)}.
\]
Therefore,
\[
    \|B_{\alpha,\lambda}\|_2
    =\lambda^2\|A(\alpha)^{-1}\|_2
    =\frac{\lambda^2}{\eta_{\min}(\alpha)}.
\]
Substituting this into \eqref{eq:lipschitz-before-B} gives
\[
    \|\Phi_{\alpha,\lambda}(p)-\Phi_{\alpha,\lambda}(p')\|_2
    \le
    \frac{\beta\lambda^2}{2\eta_{\min}(\alpha)}
    \|p-p'\|_2.
\]
Under condition \eqref{eq:contraction-condition}, the Lipschitz constant is
strictly smaller than one. Hence $\Phi_{\alpha,\lambda}$ is a contraction map on
$\Delta_N$. Since $\Delta_N$ is closed in the complete normed space
$(\mathbb{R}^N,\|\cdot\|_2)$, it is complete. Also,
$\Phi_{\alpha,\lambda}(\Delta_N)\subseteq\Delta_N$ because softmax outputs a
probability vector. The Banach fixed-point theorem therefore implies that
$\Phi_{\alpha,\lambda}$ has a unique fixed point $p^*\in\Delta_N$. By
\ref{thm:appendix_self_consistent}, this unique $p^*$ induces a unique fixed point
$(s^*,r^*)$ of the coupled gate--retrieval subsystem.
\end{proof}

Notice that since
\[
    \eta_{\min}(\alpha)=1-\alpha(1-\mu_{\min}),
\]
the contraction condition becomes harder to satisfy as
$\alpha\to\alpha_{\rm crit}$. Thus, even for fixed $\lambda>0$, the retrieval
feedback term $\lambda^2A(\alpha)^{-1}p$ becomes increasingly amplified as
the gate subsystem approaches its own critical point.

\section{Experimental Details}\label{app:experiments}
All experiments were conducted on 1 NVIDIA H100.
\subsection{Context-Augmented Memory Separation (Figure~\ref{fig:separation})}\label{app:exp:separation}

\textbf{Stored memories.}
We fix $N=50$ unit-norm memories in $\mathbb{R}^{d}$ with $d=10$ by sampling i.i.d.\ Gaussian vectors and normalizing:
\[
\zeta^\mu \sim \mathcal{N}(0,I_d),\qquad \zeta^\mu \leftarrow \zeta^\mu/\|\zeta^\mu\|_2,\qquad \mu\in[N].
\]
The inverse temperature is $\beta=3.5$.

\textbf{Query and context noise model.}
Each trial selects a target index $\mu$ uniformly from $[N]$ and draws
\[
q=\zeta^\mu+\sigma_q\,\varepsilon,\qquad \varepsilon\sim\mathcal{N}(0,I_d),
\]
\[
c=\zeta^\mu+\sigma_c\,\eta,\qquad \eta\sim\mathcal{N}(0,I_d),\qquad \sigma_c=0.3.
\]
We sweep $\sigma_q$ over 12 evenly spaced noise levels in $[0,3]$.

\textbf{Metrics and plotting quantities.}
Retrieval accuracy is the fraction of trials for which $\arg\max_{\nu\in[N]}\hat r_\nu=\mu$.
For each trial we also compute the effective separation gap
\[
\Delta = S_\mu(q;\lambda) - \max_{\nu\neq\mu} S_\nu(q;\lambda)
\]
and the target probability $\hat r_\mu$.
We plot:
(i) accuracy vs.\ $\sigma_q$ for each $\lambda\in\{0,1,2,4,8,16\}$,
(ii) a scatter of $(\Delta,\hat r_\mu)$ over all trials, overlaid with the exact logistic lower bound
\[
\hat r_\mu \ge \frac{1}{1+(N-1)e^{-\beta\Delta}},
\]
and (iii) box plots of $\Delta$ at fixed noise $\sigma_q\approx 1.0$ (we collect trials with $|\sigma_q-1.0|<0.15$).
Each $(\sigma_q,\lambda)$ pair uses 6000 trials (so $12\times 6000\times 5$ total softmax evaluations). Finally, we set $\alpha=0.1$.

\subsection{Phase Transition in Gate Selectivity (Figure~\ref{fig:competition})}\label{app:exp:wta}

\textbf{Single-cluster memory model.}
Each trial generates a single cluster of $N=50$ unit-norm memories in $\mathbb{R}^{d}$ with $d=10$ as follows.
Sample a random centroid $g\sim\mathcal{N}(0,I_d)$, normalize, and scale to norm $2$:
\[
g\leftarrow 2\,g/\|g\|_2.
\]
Then for each $\mu\in[N]$ draw
\[
\zeta^\mu = g + \sigma\,\xi^\mu,\qquad \xi^\mu\sim\mathcal{N}(0,I_d),\qquad \sigma=0.3,
\]
and normalize $\zeta^\mu\leftarrow \zeta^\mu/\|\zeta^\mu\|_2$.

\textbf{Competition matrix, input, and output.}
Let $G=\zeta^\top\zeta$ and define $W_{h_2}=G-I$ (equivalently, set the diagonal of $G$ to zero).
For each penalization strength $\alpha$ we form
\[
A(\alpha)=I+\alpha W_{h_2},
\qquad
u=\frac{1}{N}\mathbf{1},
\qquad
p^*(\alpha)=\mathrm{softmax}\!\bigl(\beta s(\alpha)\bigr),
\]
with $\beta=3.5$.
When $A(\alpha)$ is numerically stable (see below), we use the exact fixed point $s(\alpha)=A(\alpha)^{-1}u$.

\textbf{Estimating the critical point.}
To estimate $\mu_N=\lambda_{\min}(G)$, we sample 1000 independent clusters, compute the smallest eigenvalue of each Gram matrix $G$, and report the mean $\mu_N$.
We then set
\[
\alpha_{\mathrm{crit}}=\frac{1}{1-\mu_N}.
\]

\textbf{Sweep over $\alpha$ and number of trials.}
We run $1000$ independent trials per $\alpha$ (seeds $0,\ldots,999$).
The sweep uses:
(i) a coarse grid of 17 points on $[0,2]$ and
(ii) a dense grid of 60 points between $\min\{\alpha_{\mathrm{crit}},1.25\}$ and $\max\{\alpha_{\mathrm{crit}},1.25\}$,
plus the value $\alpha_{\mathrm{crit}}$ itself; duplicates are removed after rounding to 4 decimals.

\textbf{Stable vs.\ unstable handling (implementation detail).}
Let $\lambda_{\min}(A(\alpha))$ be the smallest eigenvalue of $A(\alpha)$.
If $\lambda_{\min}(A(\alpha))>0.01$, we compute $s(\alpha)=A(\alpha)^{-1}u$.
Otherwise, we evaluate the closed-form transient solution of the linear ODE $\dot s=u-A(\alpha)s$ at a finite horizon $t_{\mathrm{end}}$ (chosen adaptively as per Theorem \ref{thm:wta}), and set $s(\alpha)=s(t_{\mathrm{end}})$; this produces a numerically well-defined proxy that approaches the WTA behavior in the unstable regime.

\textbf{Reported quantity.}
For each $\alpha$, we report the mean peak probability
\[
\mathbb{E}\Bigl[\max_{i\in[N]} p^*_i(\alpha)\Bigr]
\]
over the 1000 trials, and visualize the sharp transition around $\alpha_{\mathrm{crit}}$ (with a zoomed-in panel near the transition).

\subsection{Coupled Gate-Retrieval Dynamics (Figure~\ref{fig:competition}, Panel d)}\label{app:exp:coupled}

\textbf{Model and coupling mechanism.}
We extend the gate-only model (Section~\ref{app:exp:wta}) by coupling retrieval activations back through the gates via a cross-circuit parameter $\lambda\geq0$.
The gate dynamics settle fast ($\tau_s\ll\tau_r$), yielding the quasi-static fixed point
\[
s^*(\alpha,\lambda) = A(\alpha)^{-1}\bigl(u + \lambda\,\hat{r}\bigr),
\]
where $\hat{r}=\mathrm{softmax}(\beta r)$ are the retrieval probabilities.
The retrieval activations $r$ then satisfy the self-consistent equation
\[
r^* = \lambda\,A(\alpha)^{-1}u + \lambda^2\,A(\alpha)^{-1}\hat{r}^*.
\]

\textbf{Fixed-point iteration.}
For each $(\alpha,\lambda)$ pair, we solve for $r^*$ by iterating
\[
r^{(k+1)} = \lambda\,A(\alpha)^{-1}u + \lambda^2\,A(\alpha)^{-1}\mathrm{softmax}(\beta r^{(k)}),
\]
initialized with small random noise $r^{(0)}\sim\mathcal{N}(0,\sigma_{\mathrm{init}}^2 I)$ where $\sigma_{\mathrm{init}}=10^{-4}$.
We run up to 500 iterations or until $\|r^{(k+1)}-r^{(k)}\|_{\infty}<10^{-8}$.
The retrieval probabilities are then $\hat{r}^*=\mathrm{softmax}(\beta r^*)$.

\textbf{Alpha sweep for panel (d).}
To capture the smooth transitions across all $\lambda$ values uniformly, we augment the sweep from Section~\ref{app:exp:wta} with 100 uniformly spaced points on $[0,2]$.
Additionally, for each $\lambda>0$, we add a dense grid of 60 points near the theoretical $\alpha_{\mathrm{crit}}^{\lambda}$.
After removing duplicates (rounded to 4 decimals), this yields approximately 150--180 $\alpha$ values.

\textbf{Memory generation and trials.}
Each trial uses the same single-cluster generation protocol (Section~\ref{app:exp:wta}), with $N=50$, $d=10$, $\sigma=0.3$, and $\beta=5.0$.
We run 1000 independent trials per $\alpha$ (seeds $0,\ldots,999$), computing retrieval probabilities for all $\lambda$ values simultaneously by reusing $A(\alpha)^{-1}$ for efficiency.

\textbf{Reported quantity.}
For each $(\alpha,\lambda)$ pair, we report the mean peak retrieval probability
\[
\mathbb{E}\Bigl[\max_{i\in[N]} p^*_i(\alpha,\lambda)\Bigr]
\]
over the 1000 trials.
Panel (d) visualizes how the phase transition shifts leftward (to smaller $\alpha$) as $\lambda$ increases.

\subsection{Replication of Separability and Alignment}
\label{app:replicate_icl_geom}

In this section, we replicate the analytical framework introduced by \cite{yangCZI2025} to verify the geometric evolution of hidden states across our selected datasets (SST-2, AG-News, and TREC). Specifically, we track two metrics across the transformer layers: \emph{separability}, quantified by the predictive accuracy of a linear probe (logistic regression) trained on the hidden representations, and \emph{alignment}, measured via the cosine similarity between the hidden states and the target unembedding vectors. 

As shown in Fig.~\ref{fig:replication_geometry}, our results faithfully reproduce the empirical phenomenology of the original study. Across all three datasets, we observe a distinct two-phase progression: class separability emerges and peaks in the early-to-middle layers, whereas direct alignment with the unembedding matrix sharpens exclusively in the final layers. This timescale separation directly supports our architectural assumption that context processing (separability) settles well before query execution (alignment).

\begin{figure}[ht]
\centering
\begin{subfigure}{0.8\linewidth}
\centering
\includegraphics[width=\linewidth]{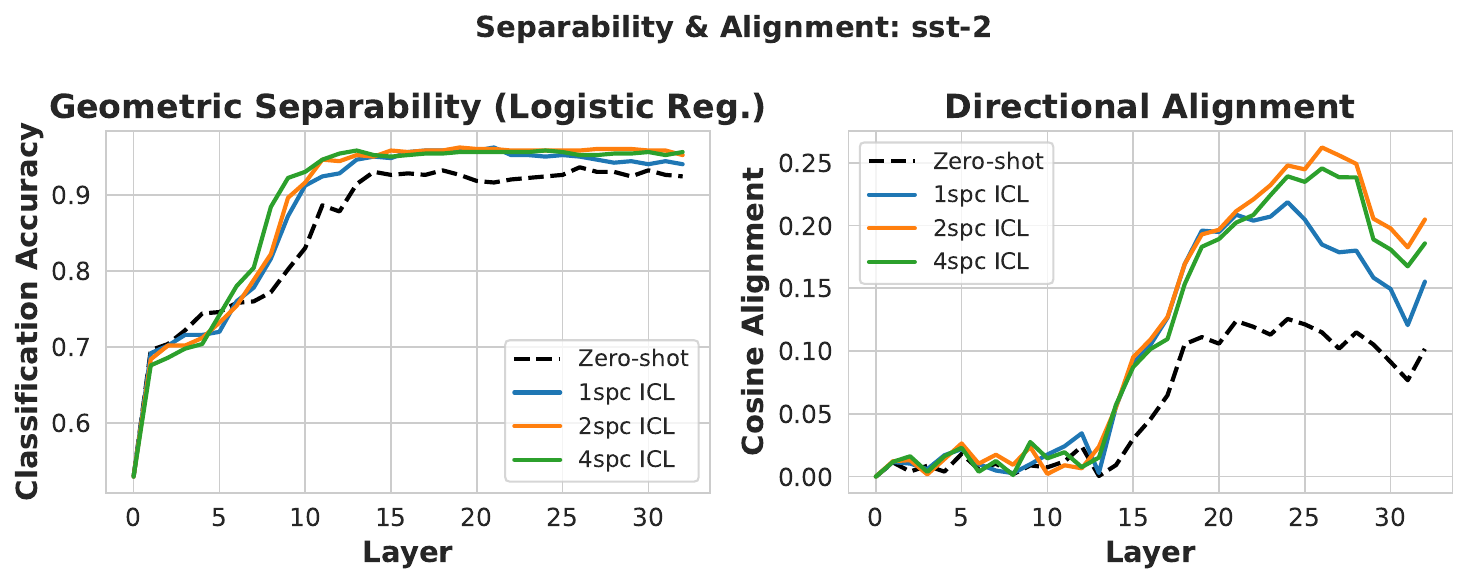}
\caption{SST-2}
\label{fig:rep_sst2}
\end{subfigure}

\vspace{0.4cm} 

\begin{subfigure}{0.8\linewidth}
\centering
\includegraphics[width=\linewidth]{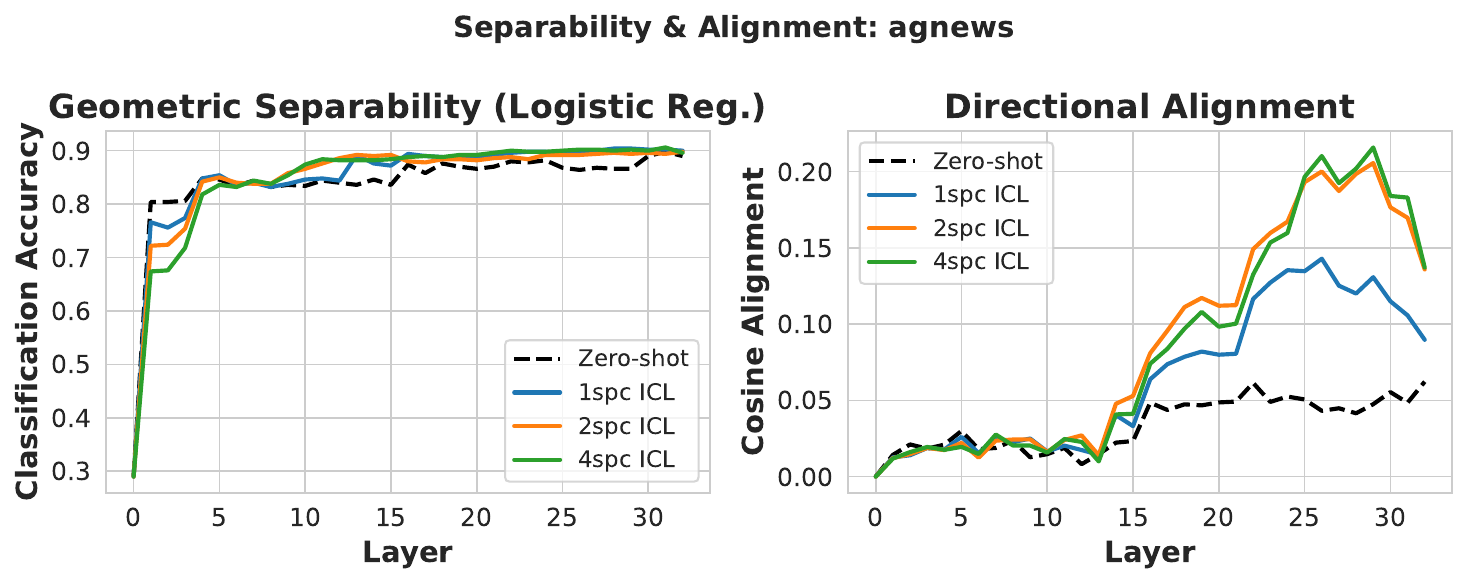}
\caption{AG-News}
\label{fig:rep_agnews}
\end{subfigure}

\vspace{0.4cm} 

\begin{subfigure}{0.8\linewidth}
\centering
\includegraphics[width=\linewidth]{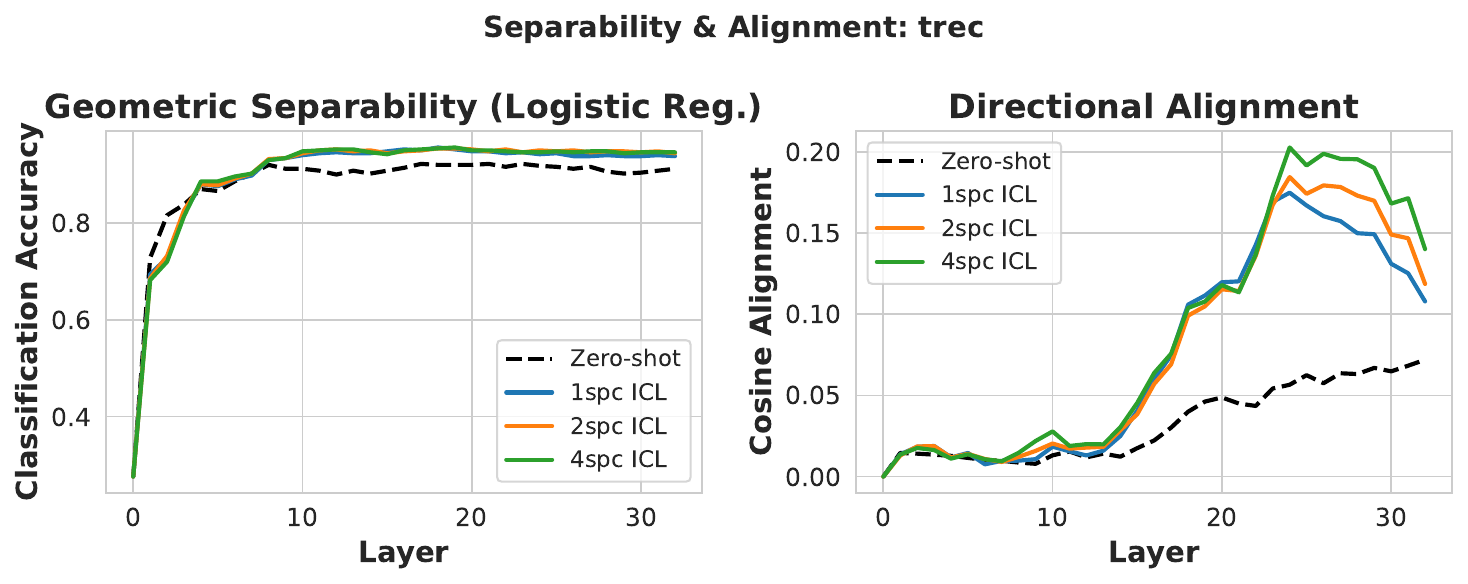}
\caption{TREC}
\label{fig:rep_trec}
\end{subfigure}
\caption{\textbf{Evolution of Separability and Alignment across layers.} Replication of the geometric progression showing early-layer linear separability (via logistic regression) followed by late-layer unembedding alignment across three classification datasets.}
\label{fig:replication_geometry}
\end{figure}

\subsection{Additional Datasets for ICL Classification: AG-News \& TREC}
\label{app:add_datasets}

Building upon the layer-sweep analysis presented in the main text for SST-2, we present the extended coupled retrieval experiments for the AG-News ($L=4$) and TREC ($L=6$) datasets. Fig.~\ref{fig:neff_layers_agnews} and Fig.~\ref{fig:neff_layers_trec} show the concentration phenomenon induced by the context, while Fig.~\ref{fig:retrieval_sweep_agnews} and Fig.~\ref{fig:retrieval_sweep_trec} display the retrieval accuracy across all permutations of context-extraction layers and zero-shot query layers under 1-shot and 4-shot settings. 

\begin{figure}[ht]
\centering
\includegraphics[width=0.8\linewidth]{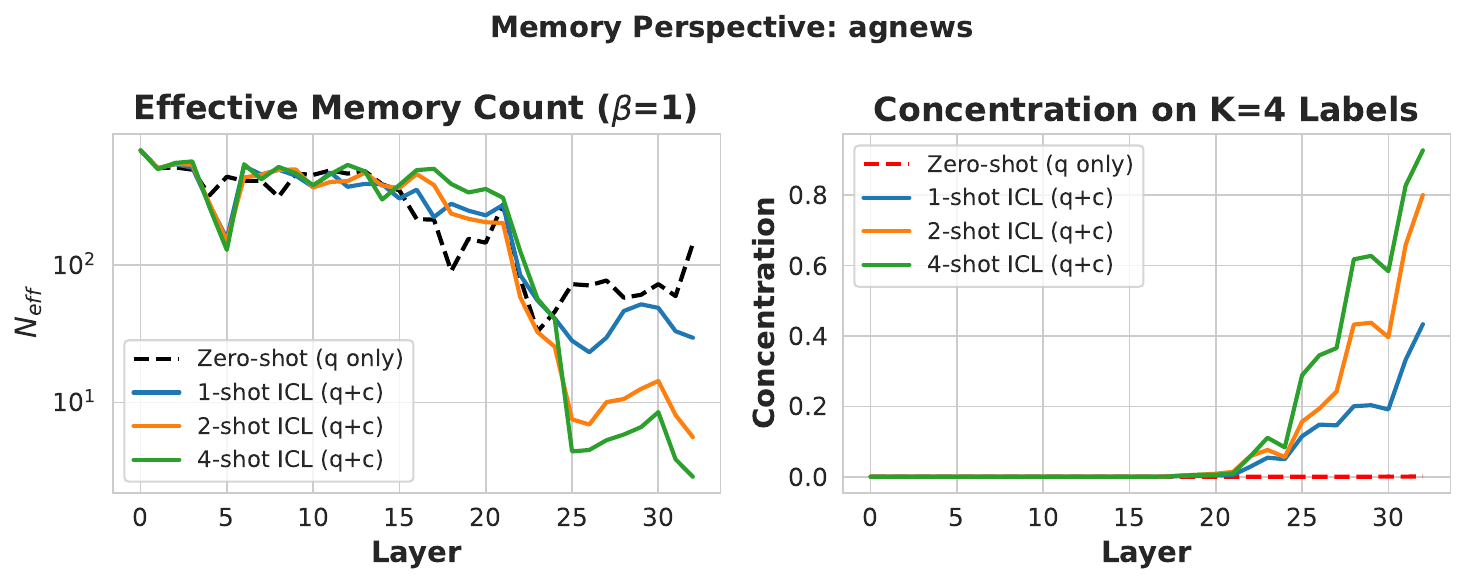}
\caption{\textbf{Native ICL processing collapses the memory space onto the
label set.} Effective number of active memories $N_\mathrm{eff}$ obtained
by decoding $h_{\mathrm{zero}}^{(\ell)}$ and $h_{\mathrm{ICL}}^{(\ell)}$
through the unembedding at each layer, shown across layers and shot
counts on AG-News.}
\label{fig:neff_layers_agnews}
\end{figure}

\begin{figure}[ht]
\centering
\includegraphics[width=0.8\linewidth]{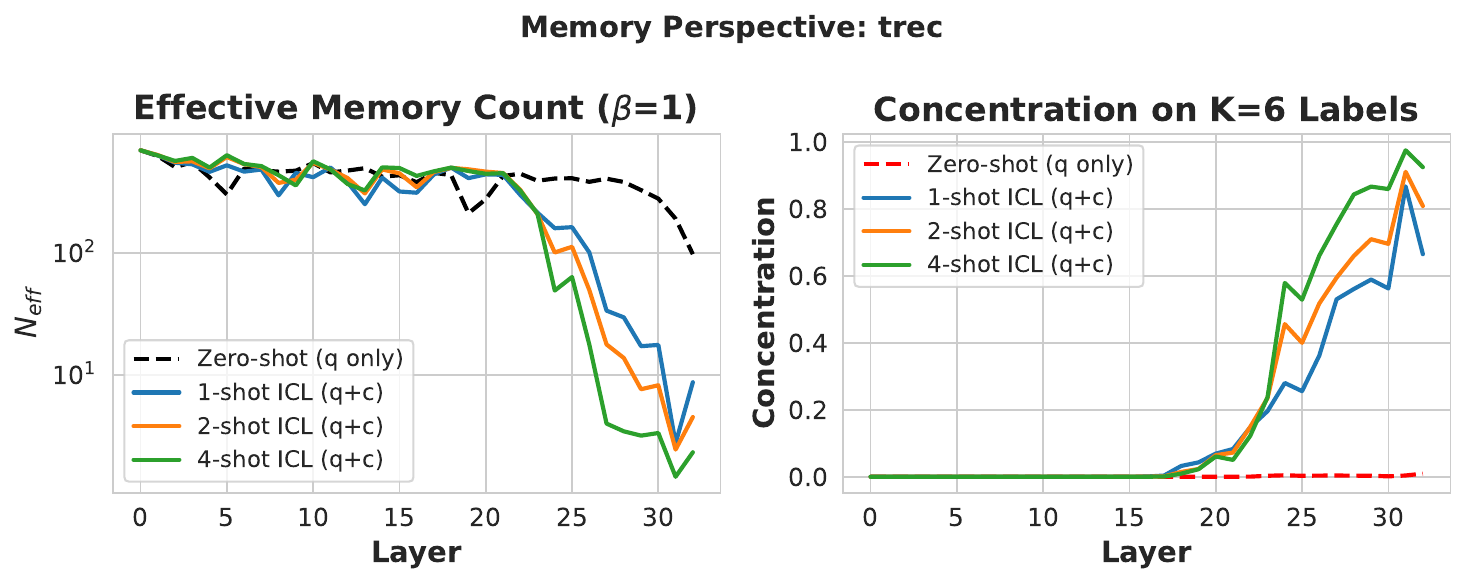}
\caption{\textbf{Native ICL processing collapses the memory space onto the
label set.} Effective number of active memories $N_\mathrm{eff}$ obtained
by decoding $h_{\mathrm{zero}}^{(\ell)}$ and $h_{\mathrm{ICL}}^{(\ell)}$
through the unembedding at each layer, shown across layers and shot
counts on TREC.}
\label{fig:neff_layers_trec}
\end{figure}

The trends across these multi-class datasets corroborate our primary binary classification findings. Despite dropping the second-order feedback loop to test a strict first-order approximation, the linear additive formulation recovers a decent chunk of full ICL performance. Note that the retrieval accuracies under our paradigm are much lower, both due to the fact the baseline ICL LLM accuracy is much lower, and due to the increased difficulty of the task (larger label set). Nevertheless, consistently across tasks, context vectors extracted from the later half of the network ($\ell \ge 20$) carry sufficient structural information to successfully reshape the energy landscape and drive accurate retrieval.

\begin{figure}[ht]
\centering
\begin{subfigure}{\linewidth}
\centering
\includegraphics[width=0.75\linewidth]{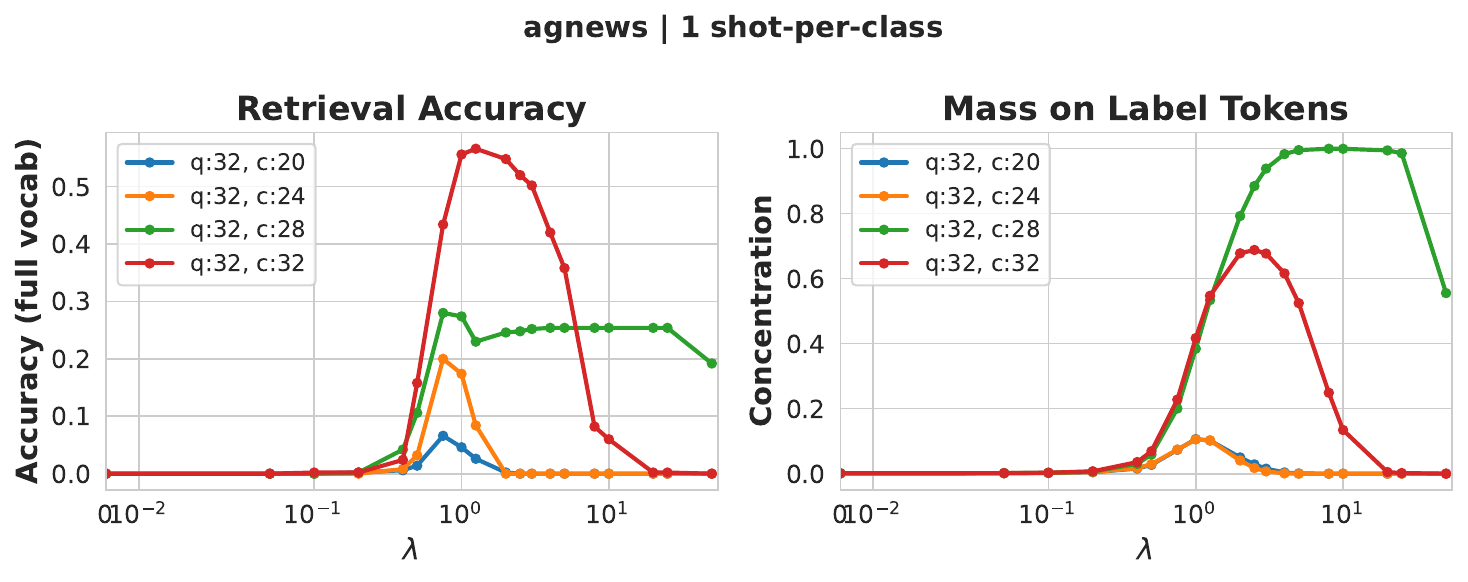}
\label{fig:agnews_retrieval_1shot}
\end{subfigure}
\vspace{0.2cm} 
\begin{subfigure}{\linewidth}
\centering
\includegraphics[width=0.75\linewidth]{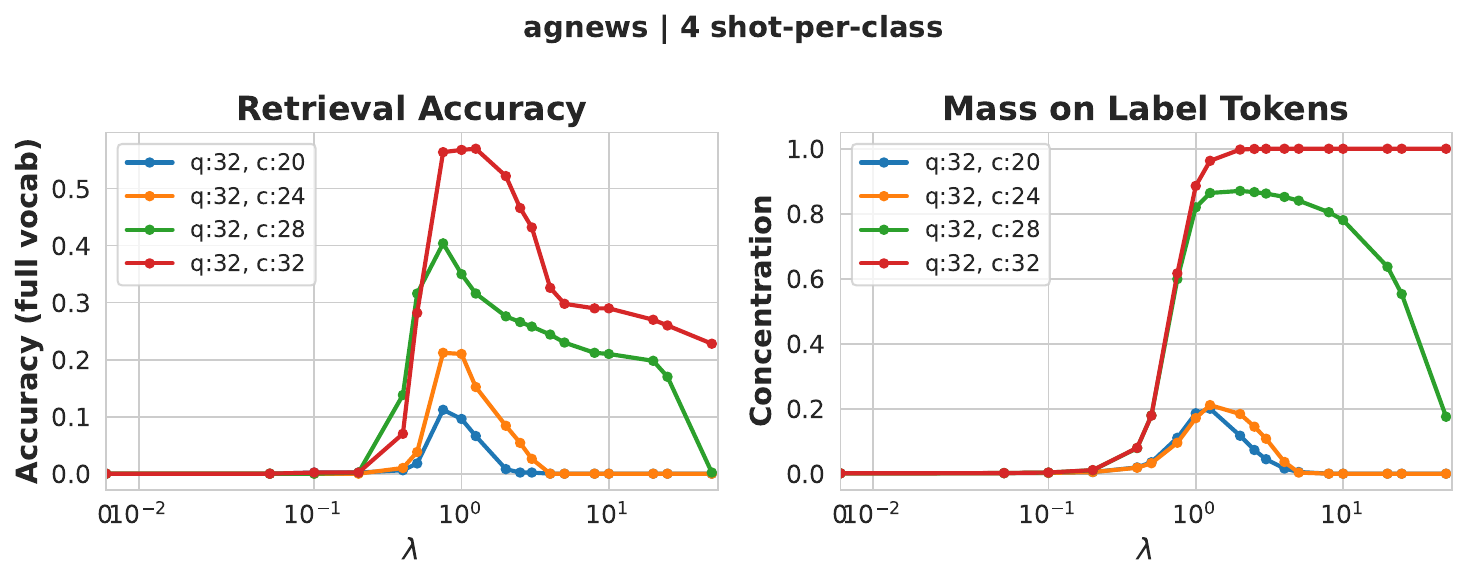}
\label{fig:agnews_retrieval_4shot}
\end{subfigure}
\caption{\textbf{Coupled retrieval across context-query layer combinations (AG-News).} Retrieval accuracy as a function of coupling strength $\lambda$, context extraction layer $\ell$, and number of shots (1 and 4) on the AG-News dataset.}
\label{fig:retrieval_sweep_agnews}
\end{figure}

\begin{figure}[ht]
\centering
\begin{subfigure}{\linewidth}
\centering
\includegraphics[width=0.75\linewidth]{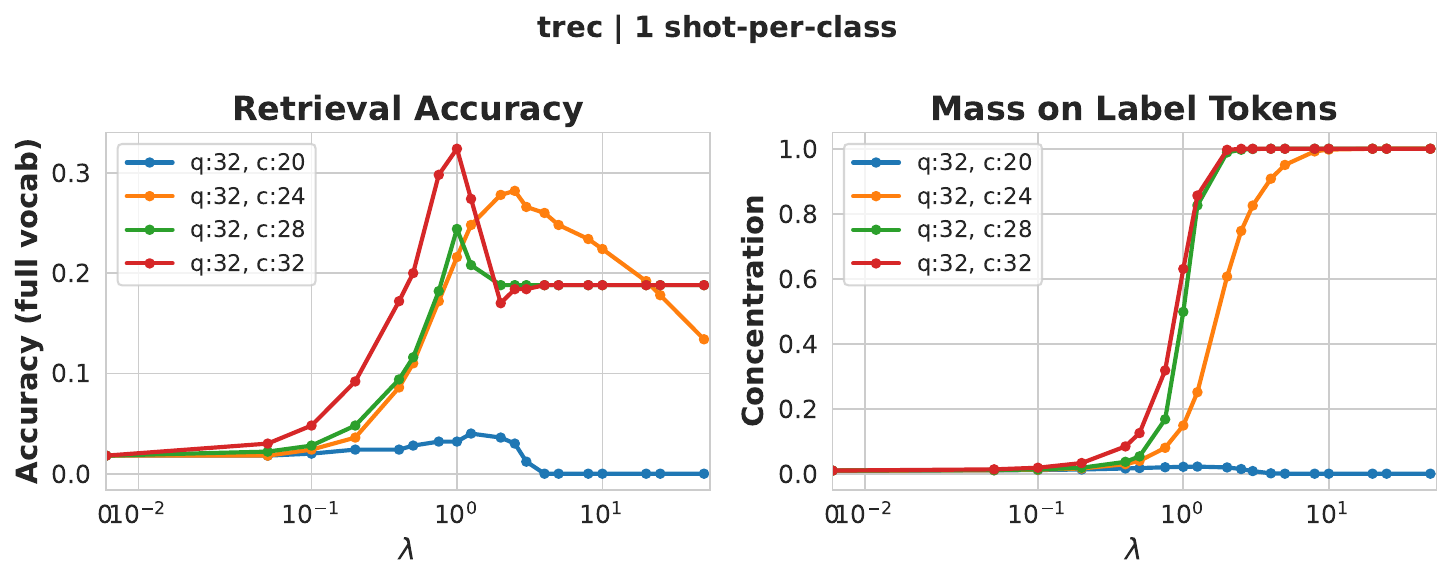}
\label{fig:trec_retrieval_1shot}
\end{subfigure}
\vspace{0.2cm} 
\begin{subfigure}{\linewidth}
\centering
\includegraphics[width=0.75\linewidth]{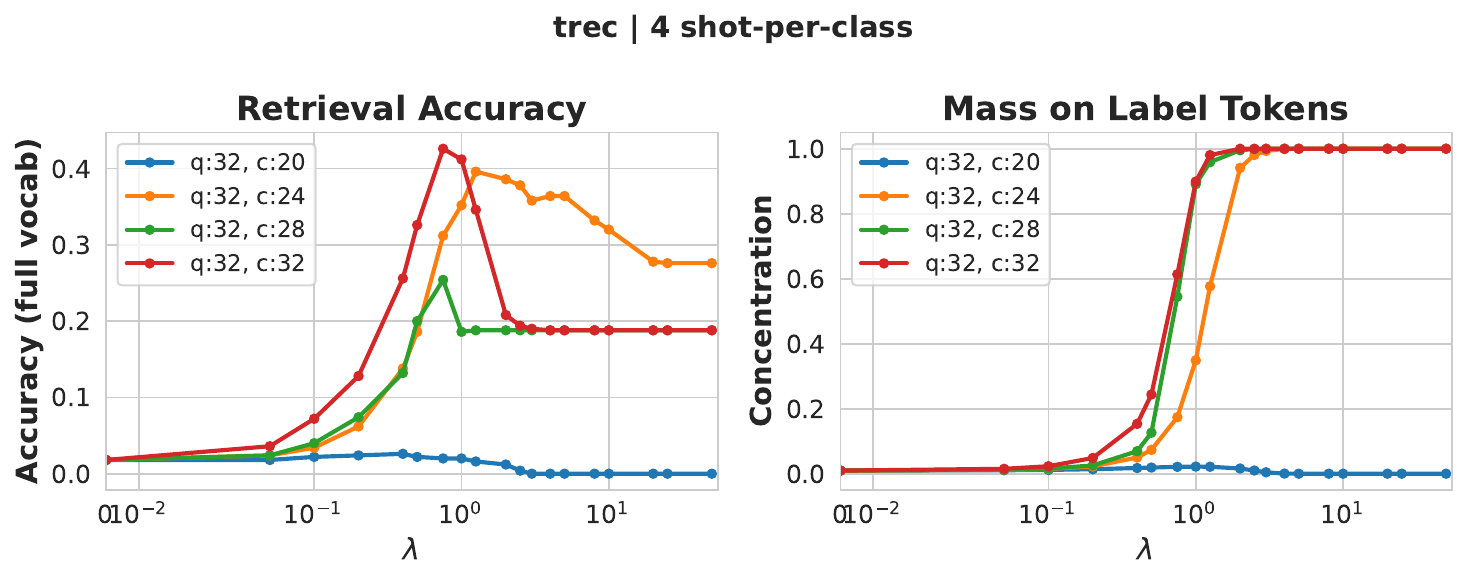}
\label{fig:trec_retrieval_4shot}
\end{subfigure}
\caption{\textbf{Coupled retrieval across context-query layer combinations (TREC).} Retrieval accuracy as a function of coupling strength $\lambda$, context extraction layer $\ell$, and number of shots (1 and 4) on the TREC dataset.}
\label{fig:retrieval_sweep_trec}
\end{figure}

\clearpage


\end{document}